\documentclass[lettersize,journal]{IEEEtran}
\usepackage{amsmath,amssymb}

\usepackage{subcaption}
\usepackage{bm}
\usepackage{graphicx,graphics,color,psfrag}
\usepackage{cite,balance}
\usepackage{caption}
\allowdisplaybreaks
\usepackage{algorithm}
\usepackage{algorithmic}
\usepackage{accents}
\usepackage{amsthm}
\usepackage{url}
\usepackage[english]{babel}
\usepackage{multirow}
\usepackage{enumerate}
\usepackage{cases}
\usepackage{stfloats}
\usepackage{dsfont}
\usepackage{color,soul}
\usepackage{amsfonts}
\usepackage{cite,graphicx,amsmath,amssymb}
\usepackage{fancyhdr}
\usepackage{hhline}
\usepackage{graphicx,graphics}
\usepackage{array,color}
\usepackage{mathtools}
\usepackage{amsmath}

\newtheorem{lemma}{\emph{\underline{Lemma}}}
\newtheorem{corollary}{\emph{\underline{Corollary}}}

\newtheorem{proposition}{\emph{\underline{Proposition}}}

\newtheorem{example}{\bf \emph{\underline{Example}}}
\newtheorem{remark}{\bf \emph{\underline{Remark}}}

\def\({\left(}
\def\){\right)}

\setcounter{page}{1}



\def\b0{{\mathbf{0}}}







\newcommand{\diag}{\mathrm{diag}}

\graphicspath{{./Figs/}}

\setlength{\abovedisplayskip}{3pt} 
\setlength{\belowdisplayskip}{3pt} 
\setlength{\abovedisplayshortskip}{3pt} 
\setlength{\belowdisplayshortskip}{3pt}
\begin{document}
	\captionsetup[figure]{name={Fig.}} 
	
	\title{Near-field Physical Layer Security: Robust Beamforming under Location Uncertainty}
	\author{Chao  Zhou, 
		Changsheng~You,~
		Cong Zhou,  Chengwen Xing,~
		and Jianhua Zhang
		\thanks{
			Chao Zhou, Changsheng You and Cong Zhou are with the Department of Electronic and Electrical Engineering, Southern University of Science and Technology (SUSTech), Shenzhen
			518055, China (e-mail: zhouchao2024@mail.sustech.edu.cn, youcs@sustech.edu.cn, and zhoucong@stu.hit.edu.cn). 
		Chengwen Xing is with the School
        of Information and Electronics, Beijing Institute of Technology, Beijing 100081, China (e-mail: xingchengwen@gmail.com). Jianhua Zhang is with State Key Laboratory of Networking and Switching Technology, Beijing University of Posts and Telecommunications, Beijing 100876, China (jhzhang@bupt.edu.cn)
            \emph{(Corresponding author: Changsheng You.)}  
		}\vspace{-14pt}} 
	
	\maketitle
	\begin{abstract}
		In this paper, we study {\it robust} beamforming design for {\it near-field} physical-layer-security (PLS) systems, where a base station (BS) equipped with an extremely large-scale array (XL-array) serves multiple near-field legitimate users (Bobs) in the presence of multiple near-field eavesdroppers (Eves). Unlike existing works that mostly assume perfect channel state information (CSI) or location information of Eves, we consider a more practical and challenging scenario in this paper, where the locations of Bobs are perfectly known, while only {\it imperfect location information} of Eves is available at the BS.
		We first formulate a robust optimization problem to maximize the sum-rate of Bobs while guaranteeing a worst-case limit on the eavesdropping rate under location uncertainty. By transforming Cartesian position errors into the polar domain, we reveal an important near-field {\it angular-error amplification effect}, i.e., under the same location error, the closer the Eve, the larger the angle error, which severely degrades the performance of conventional robust beamforming methods based on imperfect channel state information. To address this issue, we first establish the conditions for which the first-order Taylor approximation of the near-field channel steering vector under location uncertainty is largely accurate.
		Then, we propose a {\it two-stage} robust beamforming method, which first partitions the uncertainty region into multiple fan-shaped sub-regions, followed by the second stage to formulate and solve a refined linear-matrix-inequality (LMI)-based robust beamforming optimization problem. In addition, the proposed method is further extended to scenarios with multiple Bobs and multiple Eves. Finally, numerical results validate that the proposed method achieves a superior trade-off between rate performance and secrecy robustness, hence significantly outperforming existing benchmarks under Eve location uncertainty.
	\end{abstract}
	\begin{IEEEkeywords}
		Robust beamforming design, near-field communications, physical layer security.
	\end{IEEEkeywords}
    \vspace{-6pt}
	\section{Introduction}
	Extremely large-scale arrays (XL-arrays) have emerged as a promising technology to enhance the spectral efficiency and spatial resolution of next-generation wireless networks~\cite{YouNGAT}. In particular, the greatly enlarged array aperture of XL-arrays fundamentally alters wireless propagation environment, transitioning from traditional far-field planar wavefront to the near-field spherical one~\cite{Cui2022CE,zhang2023channel,miao2023sub,yuan2022spatial}. 
	This property enables flexible beamforming control in the angle and range domains, which enhances the performance of various wireless applications, such as physical layer security (PLS)~\cite{LiuyuanweiPLS}, integrated sensing and communications (ISAC)~\cite{Cong_ISAC}, and next generation multiple access (NGMA)~\cite{DingNGMA}, among others. 

	In this paper, we study robust beamforming design for near-field PLS systems under imperfect location information of eavesdroppers (Eves). Specifically, we show that the performance of spotlight focusing beams in near-field systems is highly sensitive to the angle errors, thus making conventional robust beamforming designs based on imperfect channel state information (CSI) ineffective.  To tackle this issue, we propose an efficient two-stage robust near-field beamforming method by using the uncertainty region partitioning and refined general sign-definiteness (GSD) techniques.
    \vspace{-6pt}
	\subsection{Prior Works}
	\subsubsection{Far-field PLS} 
	Secrecy capacity was first introduced in~\cite{wyner1975wire}, which defines the theoretical upper bound of achievable secrecy rate. 
	For PLS systems, the spatial diversity enabled by multiple antennas allows the base station (BS) to steer directional beams toward legitimate users (Bobs) while suppressing information leakage at potential Eves~\cite{ZhengPLS}. This is achieved by customized beamforming designs and power allocation based on the acquired CSI~\cite{cwxTSP2020}.
	Moreover, for scenarios with high channel correlation between Bobs and Eves, it becomes difficult to spatially distinguish between them, making the beamforming and artificial noise (AN) ineffective.
    This issue, however, can be effectively addressed by e.g., deploying intelligent reflecting surfaces (IRSs) to dynamically reshape the channel correlation for improving secrecy performance~\cite{ZhengPLS_RIS}.

	However, in practice, obtaining accurate CSI or location information of Eves is highly challenging due to their non-cooperative nature.  To address this issue, robust beamforming design was studied to optimize secrecy performance under the worst-case or probabilistic uncertainty models, thereby ensuring reliable secrecy performance even in the presence of incomplete or imperfect knowledge of Eves~\cite{GuiZhou_Robust}. For instance, the authors in~\cite{Lin2021Robust} investigated robust beamforming in satellite–terrestrial integrated networks under imperfect angle-of-departure (AoD) information, by using uniform angular sampling to model the AoD uncertainty.
	Furthermore, this AoD uncertainty set was modeled as a norm-bounded matrix in~\cite{li2023robust} to facilitate robust beamforming designs. For scenarios with imperfect CSI, existing works (e.g.,\cite{Ng_DWK,GuiZhou_Robust}) typically assumed that the CSI error lies within a bounded uncertainty set. Based on this assumption, efficient techniques such as S-Procedure and the GSD lemma~\cite{GuiZhou_Robust} were used to guarantee the worst-case performance.

	\subsubsection{Near-field PLS}
	In contrast to far-field systems, near-field spherical wavefronts introduce an additional range-domain degree-of-freedom (DoF) and enable \emph{beam-focusing} in the range domain~\cite{zhang2024new}. This capability significantly enhances secrecy performance, particularly when Eves are located at the same angle but different distances with Bobs~\cite{LiuyuanweiPLS}.
	Exploiting this favorable property, the authors in~\cite{zhang2024PLS} analytically examined the effectiveness of AN in near-field systems and proposed a low-complexity yet efficient beamforming design tailored for near-field PLS. Building upon these insights, recent studies have investigated the performance of near-field PLS systems using low-complexity analog beamforming schemes that capitalize on the beam-focusing characteristics~\cite{NFC_PLSChen,NFWideband_PLS,liu2025physical}. For example, the authors in~\cite{NFC_PLSChen} demonstrated the potential of fully analog secure beamforming in enhancing PLS performance.
	Moreover, the analog beamforming scheme was extended in~\cite{NFWideband_PLS} and~\cite{liu2025physical} to more complex near-field wideband systems and mixed near-field and far-field communication systems, respectively. These studies revealed that the joint design of power allocation and analog beamforming is capable of achieving satisfactory secrecy performance.

    In view of the above works, near-field PLS performance typically relies on perfect CSI of both Bobs and Eves, which, however, may not be achieved in practice due to imperfect/outdated CSI. 
	To tackle this issue, the authors in~\cite{chen2025robust} considered imperfect CSI for near-field Eves, by modeling this uncertainty using a norm-bounded manner to enable robust beamforming. This robust beamforming design was further extended in ~\cite{RobustNFCISAC} to near-field secure ISAC systems. On the other hand, since CSI acquisition of near-field Eve is practically challenging, a location-aware near-field beamforming design was recently proposed~\cite{Locationmag}, which exploits geometric relationships among transceivers to enhance beamforming accuracy and robustness, especially under imperfect or unknown CSI conditions. In addition,  the authors in~\cite{NFCrobust} investigated robust beamforming for near-field systems under imperfect user location information, where an approximated CSI error bound was obtained to design robust beamforming.
	Despite these advancements, there still exist several issues in robust beamforming design for near-field PLS systems. 
\begin{itemize}
	\item First, the non-cooperative nature of Eves makes it difficult to acquire CSI of Eves in practice, which may exacerbate CSI estimation errors and undermine the effectiveness of existing beamforming designs.
	\item Second, the spotlight focusing beams in near-field systems are highly sensitive to the location errors of both Bobs and Eves. However, it is unknown how the angle and/or range errors affect the near-field rate and secrecy performance.
    \item Third, existing robust beamforming methods, especially those based on CSI uncertainty, mostly used linear matrix inequality (LMI) formulations. However, in near-field systems with a huge number of antennas, the high-dimensional LMIs incur prohibitively high computational complexity, making it unaffordable in practice.
    \end{itemize}
	\vspace{-10pt}
	\subsection{Contributions}
	Motivated by the above,  we study in this paper a robust beamforming design for near-field PLS systems as shown in Fig.~\ref{Fig:SystemModel}, where a BS equipped with an XL-array transmits confidential signals to multiple near-field Bobs in the presence of multiple near-field Eves. Unlike existing near-field PLS studies that mostly assume perfect CSI of Eves available at the BS, we consider a more practical and challenging scenario where only imperfect location information of Eves is available, while the location information of Bobs is assumed to be perfectly known at the BS based on e.g., location-information sharing between them.  To ensure secure communications under such location uncertainty, we propose a robust beamforming method under location uncertainty of Eves. The main contributions of this paper are summarized as follows.
	
	\begin{itemize}
		\item First, we formulate a robust beamforming optimization problem to maximize the achievable sum-rate of multiple Bobs, while ensuring the worst-case eavesdropping rate at each Eve below a prescribed threshold under location uncertainty. By transforming the position uncertainty from the Cartesian coordinate system to the polar coordinate system, we reveal an important near-field \emph{angular-error amplification effect}, i.e., under the same location error, the closer the Eve, the larger the angle error, which severely degrades the performance of conventional robust beamforming methods.
		
		\item Second, for the special case with one Bob and one Eve, we analyze the accuracy of the first-order Taylor approximation for the near-field channel steering vector (CSV) under location uncertainty and establish the conditions under which the first-order Taylor approximation is largely accurate. Based on this, we propose a new two-stage robust beamforming design. In the first stage, the spherical location uncertainty region is partitioned into multiple fan-shaped sub-regions such that the Taylor approximation is locally accurate within each sub-region. Then, in the second stage, by applying the first-order Taylor expansion around a surrogate location within each sub-region, we formulate and solve a refined LMI-based optimization problem that converts the intractable secrecy constraints into tractable forms with low computational~complexity.
		
		\item Finally, we extend the proposed method to the more general scenario involving multiple Bobs and multiple Eves for 
		maximizing the achievable sum-rate, while ensuring each Bob meeting its worst-case secrecy rate constraint across all Eves. 
		In addition, we discuss the general scenarios with location uncertainty of Bobs and multi-path
		channels. Numerical results demonstrate that the proposed method achieves a favorable rate performance and robust secrecy, even when the angle errors of Eves caused by the near-field angular-error amplification effect is significant.
	\end{itemize}

	\section{System Model and Problem Formulation}\label{Sec2:SysModel}
	We consider an XL-array enabled secure communication system as shown in Fig.~\ref{Fig:SystemModel}, where an $N$-antenna BS serves $K$ single-antenna Bobs in the downlink, which are denoted by $\mathcal{K}\triangleq \{1,2,\ldots,K\}$. 
	Meanwhile, $M$ single-antenna Eves, denoted by  $\mathcal{M}\triangleq \{1,2,\ldots,M\}$, are located near the BS for intercepting legitimate communications at short distance. 
	Without loss of generality, the XL-array is placed at the $y$-axis, with its array center positioned at $\mathbf{u}_{0} = [0,0]^T$. Given half-wavelength inter-antenna spacing, the position of  antenna $n$ is given by $\mathbf{u}_{n} = [0,u_{n}]^T$, where 
	$	u_{n} = \frac{(2n - N - 1)}{2} d,~\forall n \in \mathcal{N}$,
	with $d \triangleq \lambda/2$  and $\lambda$ denoting the carrier frequency.
	\begin{figure}[t]
		\centering
		\includegraphics[width=0.3\textwidth]{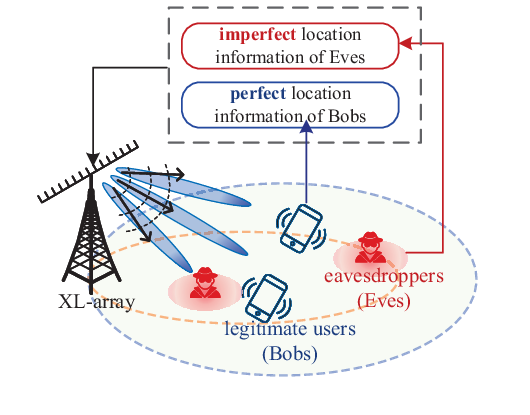}
		\caption{An XL-array enabled near-field PLS system.} \label{Fig:SystemModel}
		\vspace{-15pt}
	\end{figure}

	\subsection{Channel Model}
	The propagation channels associated with the XL-array are modeled based on spherical wavefronts. First, consider an arbitrary Bob $k$, located at $\mathbf{q}_{{\rm B},k} = [x_{{\rm B},k},y_{{\rm B},k}]^T$ in the Cartesian coordinate. We denote by $ \mathbf{h}_{{\rm B},k}^{H} \in \mathbb{C}^{1 \times N}$  its channel from the BS, modeled as
	\begin{align}\label{Exp:Bobchannel}
		\mathbf{h}_{{\rm B},k}^{H} \!=\! \sqrt{N}h_{{\rm B},k} \mathbf{a}^{H}(\!\mathbf{q}_{{\rm B},k}\!)
		\!+\!\sqrt{\!\frac{N}{L_{{\rm B},k}}}\!\sum_{\ell=1}^{L_{{\rm B},k}}\! h_{{\rm B},k,\ell} \mathbf{a}^{H}(\!\mathbf{q}_{{\rm B},k,\ell}\!),
	\end{align}
	which consists of one line-of-sight (LoS) path and $L_{{\rm B},k} $ non-LoS (NLoS) paths. Herein, $h_{{\rm B},k} = \frac{\sqrt{h_0}}{r_{{\rm B},k}} e^{-\jmath \frac{2\pi}{\lambda}r_{{\rm B},k}}$  denotes the complex-valued channel gain where $h_0$ is the reference channel gain at a range of 1  meter (m), and $r_{{\rm B},k} = \| \mathbf{q}_{{\rm B},k} - \mathbf{u}_{0} \|_{2}$ is the distance between Bob $k$ and the XL-array center. Additionally, $h_{{\rm B},k,\ell}$ and $\mathbf{q}_{{\rm B},k,\ell}$ denote the complex-valued channel gain and the scatterer location of path $\ell$, respectively. We consider high-frequency band scenarios, where NLoS paths exhibit negligible power due to severe path-loss and shadowing. 
    As such, the channel of Bob $k$ in~\eqref{Exp:Bobchannel}  can be approximated by its LoS component, i.e., $\mathbf{h}_{{\rm B},k}^{H} \approx \sqrt{N}h_{{\rm B},k} \mathbf{a}^{H}(\mathbf{q}_{{\rm B},k})$,\footnote{In multi-path scenarios, the NLoS component can be upper-bounded by its maximum power, upon which the proposed method can be extended, with details provided in {\bf Remark~\ref{Dis:Multipath}}.} where the near-field CSV, $\mathbf{a}(\mathbf{q}_{{\rm B},k})$, is given by
	\begin{align}\label{Exp:Gensteering}
        [\mathbf{a}(\mathbf{q}_{{\rm B},k})]_{n} = \frac{1}{\sqrt{N}} e^{-\jmath \frac{2\pi}{\lambda} \big(\|\mathbf{q}_{{\rm B},k} - \mathbf{u}_{n} \|_{2} - \|\mathbf{q}_{{\rm B},k} - \mathbf{u}_{0} \|_{2} \big) }.
	\end{align}
	Herein, $\|\mathbf{q}_{{\rm B},k} - \mathbf{u}_{n} \|_{2}$ represents the distance between Bob $k$ and the $n$-th XL-array antenna, which can be obtained based on Fresnel approximation
	\begin{align}
		&\|\mathbf{q}_{{\rm B},k} - \mathbf{u}_{n} \|_{2}  = \sqrt{r_{{\rm B},k}^{2} + u_{n}^{2} - 2 r_{{\rm B},k} u_{n} \sin\theta_{{\rm B},k}} \nonumber \\
		&\overset{(a)}{\approx} r_{{\rm B},k} -u_{n} \sin\theta_{{\rm B},k} + \frac{u_{n}^2 \cos^2\theta_{{\rm B},k}}{2r_{{\rm B},k}}, 
	\end{align}
	where $r_{{\rm B},k}$ and $  \theta_{{\rm B},k} $ are respectively the range and (physical) angle between the BS and Bob $k$, given by
	\begin{align}
		r_{{\rm B},k} = \sqrt{x_{{\rm B},k}^{2} + y_{{\rm B},k}^{2}},~ \theta_{{\rm B},k} = \arctan\left(\frac{y_{{\rm B},k}}{x_{{\rm B},k}}\right).
	\end{align}
	As such, the near-field CSV in~\eqref{Exp:Gensteering} can also be expressed as the following function of  $ r_{{\rm B},k} $ and $ \theta_{{\rm B},k} $
	\begin{align}\label{Exp:NF_steeringvec}
        [\mathbf{a}(\theta_{{\rm B},k},r_{{\rm B},k})]_{n} = \frac{1}{\sqrt{N}} e^{\jmath \frac{2\pi}{\lambda} \big(u_{n} \sin \theta_{{\rm B},k} - \frac{u_{n}^2 \cos^2\theta_{{\rm B},k}}{2 r_{{\rm B},k}} \big) }.
	\end{align}
	
	Similarly, the channel from the BS to the $m$-th Eve, denoted by $\mathbf{h}_{{\rm E},m}^{H} \in \mathbb{C}^{1 \times N} $, can be modeled as 
	\begin{align}
		\mathbf{h}_{{\rm E},m}^{H}  = \sqrt{N} h_{{\rm E},m}\mathbf{a}^{H}(\mathbf{q}_{{\rm E},m}),
	\end{align} 
	where $h_{{\rm E},m} = \frac{\sqrt{h_0}}{r_{{\rm E},m}} e^{-\jmath \frac{2\pi}{\lambda}r_{{\rm E},m}}$ denotes the complex-valued channel gain and $ \mathbf{q}_{{\rm E},m} = [x_{{\rm E},m}, y_{{\rm E},m}]^T $ is the location of Eve. By using $ r_{{\rm E},m} = \sqrt{x_{{\rm E},m}^{2} + y_{{\rm E},m}^{2}}$ and $\theta_{{\rm E},m} = \arctan\left({y_{{\rm E},m}}/{x_{{\rm E},m}}\right) $, the near-field CSV of Eve $m$ can also be denoted as $\mathbf{a}(\theta_{{\rm E},m},r_{{\rm E},m})$. 
	\vspace*{-20pt}
	\subsection{Location Error Model}
	Accurate CSI of both Bobs and Eves is crucial for efficient beamforming designs, which primarily depends on the \emph{location} information in near-field LoS-dominant scenarios. In this paper, we assume that the BS has perfect location information of Bobs, since Bobs usually can cooperate with the BS for CSI acquisition and location-information sharing.\footnote{The locations of Bobs can be practically acquired using existing near-field localization methods~(see, e.g., \cite{dai2025tutorial}), as well as classical methods such as those based on time-of-arrival, received signal strength, radio fingerprinting, etc. Moreover, in {\bf Remark~\ref{Dis:BobLoc_error}}, we further discuss the extension to scenarios with location errors of Bobs.}
	Nevertheless, due to the non-cooperative nature of Eves, obtaining their (perfect) CSI is challenging, making it difficult to design beamforming vectors to ensure PLS. Instead of directly acquiring the CSI of Eves, an alternative approach is by leveraging their location information, which is generally easier to obtain through e.g., non-cooperative localization methods~\cite{Sun_Evelocation,Liu_Evelocation}.
	For example, locations of Eves can be estimated by monitoring local oscillator power leakage from the radio frequency front-ends of their receivers~\cite{Sun_Evelocation}. However, this method is vulnerable to environmental interference and received noise, resulting in inaccurate location estimation in general~\cite{Sun_Evelocation,Liu_Evelocation}. As such, only \emph{imperfect} location information of Eves is assumed in this paper. Specifically, the location of Eve $m$ is modeled as
	\begin{align}\label{Exp:Loc_Eves}
		\mathbf{q}_{{\rm E},m} = \hat{\mathbf{q}}_{{\rm E},m} + \Delta \mathbf{q}_{{\rm E},m},~\forall m\in\mathcal{M},
	\end{align}
	where $\hat{\mathbf{q}}_{{\rm E},m} = [\hat{x}_{{\rm E},m}, \hat{y}_{{\rm E},m}]^T$ 
	is the estimated location of Eve $m$ and $\Delta  \mathbf{q}_{{\rm E},m} = [\Delta x_{{\rm E},m}, \Delta y_{{\rm E},m}]^T$ represents the location error in the Cartesian coordinate. 
	By employing non-cooperative localization methods, the location error vector can be modeled as a two-dimensional Gaussian distribution\footnote{In non-cooperative localization scenarios, such as those relying on received signal strength (RSS) or time of arrival (ToA) measurements, the location estimation error is the cumulative effect of several factors, including e.g., thermal noise and environmental scattering. As a result, the combination of these errors generally follows a Gaussian distribution~\cite{Sun_Evelocation, Liu_Evelocation}. In scenarios where location information of Eves is unavailable, an AN-aided jamming strategy serves as a viable approach~\cite{zhang2024PLS}. This approach projects AN signals into the null space of the legitimate communication channel, effectively suppressing potential Eves regardless of their locations.}~\cite{Sun_Evelocation,Liu_Evelocation}
	\begin{align}
		\Delta \mathbf{q}_{{\rm E},m} & \sim \mathcal{N}(\boldsymbol{\mu}_{{\rm E},m}, \boldsymbol{\Sigma}_{{\rm E},m} ),
	\end{align}
	where $\boldsymbol{\mu}_{{\rm E},m} = [\mu_{x,m},\mu_{y,m}]^T$ with $\mu_{x,m}$ and $\mu_{y,m}$ denoting the means of  $\Delta x_{{\rm E},m}$ and $\Delta y_{{\rm E},m}$, respectively. In addition,  $ \boldsymbol{\Sigma}_{{\rm E},m} $ is the covariance matrix, given by
	\begin{align}	
		\boldsymbol{\Sigma}_{{\rm E},m} = 
		\begin{bmatrix}
			\sigma_{x,m}^{2} & \rho \sigma_{x,m} \sigma_{y,m} \\
			\rho \sigma_{x,m} \sigma_{y,m} & \sigma_{y,m}^{2}
		\end{bmatrix},
	\end{align}
	where $\sigma_{x,m}^{2}$ and $\sigma_{y,m}^{2}$ are the variances of $\Delta x_{{\rm E},m}$ and $\Delta y_{{\rm E},m}$, respectively, and $\rho$ represents the correlation coefficient between these two variables. Similar to \cite{Liu_Evelocation}, we consider the case where the location errors $\Delta x_{{\rm E},m}$ and $\Delta y_{{\rm E},m}$ are independent Gaussian variables with zero mean and variances of $\sigma_{{\rm c},m}^{2}$.\footnote{For the scenario where $\mu_{x,m}\! \neq\! \mu_{y,m}\!\neq 0,~\sigma_{x,m}^{2}\! \neq\! \sigma_{y,m}^{2}$, and $\rho\! \neq\! 0 $, the uncertainty region for Eve $m$ is given by $(\mathbf{q}_{{\rm E},m}-\boldsymbol{\mu}_{{\rm E},m})^{T} \boldsymbol{\Sigma}_{{\rm E},m}^{-1}  (\mathbf{q}_{{\rm E},m}- \boldsymbol{\mu}_{{\rm E},m}) \le \chi_{1-\alpha}^{2}$. Based on this elliptical uncertainty region, the proposed robust beamforming method can be directly extended to this generalized case.} As such, the location error vector is distributed as
	\begin{align}
		\Delta \mathbf{q}_{{\rm E},m} & \sim \mathcal{N}\Bigg(
		\begin{bmatrix}
			0  \\
			0
		\end{bmatrix},
		\begin{bmatrix}\label{Eq:ErrorGua}
			\sigma_{{\rm c},m}^{2} & 0 \\
			0 & \sigma_{{\rm c},m}^{2}
		\end{bmatrix}
		\Bigg).	
	\end{align} 
	
	\subsection{Signal Model}
	Let $s_{k} \sim \mathcal{CN}(0,1)$ and $\mathbf{w}_{k} \in \mathbb{C}^{N \times 1}$, for $k\in \mathcal{K}$ denote the transmitted signal and beamforming vector of the BS to each Bob $k$. As such, the received signal at  Bob $k$ is given by 
	\begin{align}
		\underline{y}_{{\rm B},k} = \mathbf{h}_{{\rm B},k}^{H} \mathbf{w}_{k} s_{k} + \sum_{i=1,i\neq k}^{K} \mathbf{h}_{{\rm B},k}^{H} \mathbf{w}_{i} s_{i} + z_{{\rm B},k}, \forall k \in \mathcal{K},
	\end{align}
	where $z_{{\rm B},k} \sim \mathcal{CN}(0,\sigma^{2})$ is the received additive white Gaussian noise (AWGN). Accordingly, the achievable rate of Bob $k$ in bits/second/Hertz (bps/Hz) is 
	\begin{align}
		\!\!\!\!\!{R}_{{\rm B},k} = \log_{2} \Big( 1 +\frac{| \mathbf{h}_{{\rm B},k}^{H} \mathbf{w}_{k} |^2}{ \sum_{i=1,i\neq k}^{K} | \mathbf{h}_{{\rm B},k}^{H} \mathbf{w}_{i}   |^{2}  + \sigma^{2} } \Big),\forall k \in \mathcal{K}.\!
	\end{align} 
	Similarly, for Eves aiming at intercepting transmitted confidential information, the received signal at Eve $m$ is 
	\begin{equation}
		\underline{y}_{{\rm E},m} = \sum_{k=1}^{K} \mathbf{h}_{{\rm E},m}^{H} \mathbf{w}_{k} s_{k} + z_{{\rm E},m}, \forall m \in \mathcal{M},
	\end{equation}
	where $z_{{\rm E},m} \sim \mathcal{CN}(0,\sigma^{2})$ is the received AWGN at Eve $m$. We consider the scenario where each Eve is capable of canceling all multi-user interferences before decoding legitimate information without cooperation~\cite{Ng_DWK,zhang2024PLS}, which poses a practical challenge for secure communications.\footnote{In this work, we consider the most challenging worst-case scenarios to characterize the lower bound of system performance under the assumption of perfect CSI of all Bobs, while the proposed method is applicable to the case where residual interference arises due to imperfect CSI or incomplete cancellation, as well as inherent multi-user interference, by introducing auxiliary variables as in~\cite{GuiZhou_Robust}.}
	As such, the eavesdropping rate of Eve $m$ for wiretapping Bob $k$ is 
	\begin{align}
		R_{{\rm E},m,k} = \log_{2}\Big(1 + {|\mathbf{h}_{{\rm E},m}^{H} \mathbf{w}_{k} |^2}/{\sigma^{2} } \Big),~\forall m, k.
	\end{align}

	\vspace{-6pt}
	\subsection{Problem Formulation}
	Our target is to maximize the sum-rate of all legitimate users while ensuring the maximum information leakage to each Eve below a prescribed threshold, accounting for  location uncertainty of Eves. Specifically, to facilitate the worst-case robust beamforming design, we consider a \emph{bounded} spherical uncertainty region $\mathcal{A}_{{\rm E},m}$ (instead of original unbounded counterpart \eqref{Eq:ErrorGua}) for the location error of each Eve $m$, where $\hat{\mathbf{q}}_{{\rm E},m}$ is the uncertainty region center and its uncertainty radius is given by~\cite{altman2013statistics}  
	\begin{align}
		\Upsilon_{m}^{2} = \sigma_{{\rm c},m}^{2} \chi_{1-\alpha}^{2}(2), ~\forall m \in \mathcal{M}.
	\end{align}
	Herein, $\chi_{1-\alpha}^{2}(2)$ is the inverse cumulative of the Chi-square distribution with two DoFs and $1-\alpha$ denotes the confidence level (e.g., $\alpha = 0.05 $ corresponds to $95\%$ confidence). This region $\mathcal{A}_{{\rm E},m}$ ensures that the actual location of  Eve $m$ lies within it with likelihood over e.g., 95\%. Let $\hat{r}_{{\rm E},m} = \sqrt{ (\hat{x}_{{\rm E},m})^2 + (\hat{y}_{{\rm E},m})^2  }$ and $\hat{\theta}_{{\rm E},m} = \arctan\left(\frac{\hat{y}_{{\rm E},m}}{\hat{x}_{{\rm E},m}} \right)$ represent the estimated range and angle of the $m$-th Eve. 
	For each Eve $m$ within its spherical uncertainty region $\mathcal{A}_{{\rm E},m}$, 
	the angle and range errors, denoted by $\Delta \theta_{{\rm E},m}\triangleq \theta_{{\rm E},m}- \hat{\theta}_{{\rm E},m}$ and $\Delta r_{{\rm E},m} \triangleq r_{{\rm E},m}- \hat{r}_{{\rm E},m}$, respectively, are bounded by 
	\begin{subequations}
		\begin{align}
			&| \Delta \theta_{{\rm E},m} | \le \arcsin\Big({\Upsilon_{\rm m}}/{\hat{r}_{{\rm E},m}}\Big),\\
			&|	\Delta r_{{\rm E},m} | \le \Upsilon_{\rm m},
		\end{align}
	\end{subequations}
	which can be obtained based on the geometric relationship between the estimated position and the uncertainty boundary.
	
	As such, the \emph{robust} near-field sum-rate maximization problem under the maximum eavesdropping rate constraints can be  formulated as
	\begin{subequations}
		\begin{align}
			(\textbf{P1}):\; \max_{\{\mathbf{w}_{k}\}} \quad &\sum_{k=1}^{K}~ R_{{\rm B},k}  \nonumber \\
			\text{s.t.}   \quad & \sum_{k=1}^{K}~ \| \mathbf{w}_{k} \|_{2}^2 \le P_{\text{max}}, \label{C:Power}\\
			&\max_{\{\mathbf{q}_{{\rm E},m}\in \mathcal{A}_{{\rm E},m}\}}  R_{{\rm E},m,k} \le R_{\max},~\forall m, k,   \label{C:Secrecy}
		\end{align}
	\end{subequations}
	where  constraint~\eqref{C:Power} enforces the maximum BS transmit power $P_{\max}$ and constraint~\eqref{C:Secrecy} ensures that the worst-case eavesdropping rate of each Eve for wiretapping any Bob is no larger than a prescribed threshold $R_{\max}$ under the bounded location errors $\{\mathbf{q}_{{\rm E},m}\in \mathcal{A}_{{\rm E},m}\}$.
	
	Problem~\textbf{(P1)} is a non-convex optimization problem and thus hard to be optimally solved in general, since 1) the objective function is non-concave with respect to (w.r.t.) the beamforming vectors $\{\mathbf{w}_{k}\}$, and 2) the location uncertainty of Eves introduces \emph{infinitely many} non-convex constraints in~\eqref{C:Secrecy} due to continuous uncertainty sets. To tackle these difficulties and obtain useful insights, in the following, we first consider a single-Bob-single-Eve case for which an efficient method is proposed to solve Problem~\textbf{(P1)}, then the method is further extended to the general case.

	\section{Single-Bob-Single-Eve Scenario}\label{Sec:III}
	For the single-Bob-single-Eve case, we first introduce two conventional methods and point out their main limitations. Then,  an efficient \emph{two-stage} robust beamforming method is proposed to effectively overcome these limitations and aforementioned challenges in solving Problem~\textbf{(P1)}. For notational brevity, the indices of Bob and Eve are omitted in this section.
	
	First, for the considered case, Problem~\textbf{(P1)} reduces to\footnote{We consider the case where the location error is much smaller than the estimated distance, for which the amplitude error of channel $\mathbf{h}_{{\rm E},m}^{H}$ is very small and thus is neglected.} 
	\begin{subequations}
		\begin{align}
			(\textbf{P2}):\; \max_{\mathbf{w}} \quad &|\mathbf{h}_{\rm B}^H \mathbf{w} |^2 \nonumber \\
			\text{s.t.}   \quad &  \| \mathbf{w} \|_{2}^2 \le P_{\text{max}}, \label{C:Power_S}\\
			&	\max_{ (\theta_{\rm E},r_{\rm E}) \in \mathcal{A}_{\rm E} }  |\mathbf{a}^H(\theta_{\rm E},r_{\rm E}) \mathbf{w} |^2 \le \Gamma,  \label{C:Secrecy_S}
		\end{align}
	\end{subequations}
	where  $\Gamma=\sigma^2 (2^{R_{\max}}-1)/{N|h_{\rm E}|^2}$. This problem, however, is still difficult to solve due to the non-concave objective function and an infinite number of constraints in~\eqref{C:Secrecy_S}. 
	To address these issues, we first employ the successive convex approximation (SCA) technique to construct a concave surrogate function for ($|\mathbf{h}_{\rm B}^H \mathbf{w} |^2$), i.e., $
		g\big(\mathbf{w}|\mathbf{w}^{(j)}\big) \triangleq~| \mathbf{h}_{\rm B}^H \mathbf{w}^{(j)} |^2 
		+ 2\mathcal{R} \big\{ \big(\mathbf{w}^{(j)}\big)^H \mathbf{h}_{\rm B} \mathbf{h}_{\rm B}^H \big(\mathbf{w} -\mathbf{w}^{(j)} \big) \big\}$,
	where $\mathbf{w}^{(j)}$ is a feasible beamforming vector in the $j$-th SCA iteration. Next, for the non-convex constraint in~\eqref{C:Secrecy_S}, there are two typical solution methods in the existing literature (e.g.,~\cite{Lin2021Robust,li2023robust}), namely, 1) discretizing the continuous uncertainty set $\mathcal{A}_{\rm E}$ into a finite number of samples, and 2) exploiting the channel error bound to approximate the constraints~\eqref{C:Secrecy_S} as a single LMI~\cite{li2023robust}. The details of these two methods are elaborated as follows.
	
	\vspace{-6pt}
	\subsection{Conventional Methods}\label{Sec:III-A}
	\subsubsection{Sampling-based method}
	Consider the widely used uniform location sampling method in the spherical uncertainty region $\mathcal{A}_{\rm E}$~\cite{BehjooSampling}. Let $\bar{S}$ denote the total number of sampling points, for which the possible locations of Eve are sampled at $\bar{\mathbf{q}}_{\mathrm{E},\bar{s}}$,  $\forall \bar{s} \in \bar{\mathcal{S}}\triangleq\{1,2,\ldots,\bar{S}\}$. 
	As such, \eqref{C:Secrecy_S} can be  approximated as
	\begin{align}
		|\mathbf{a}^H({\bar{\mathbf{q}}_{{\rm E},\bar{s}}}) \mathbf{w} |^2 \le \Gamma,~\forall \bar{s} \in \bar{\mathcal{S}}.
	\end{align}
	This transformation converts constraints in~\eqref{C:Secrecy_S} into $\bar{S}$ convex quadratic constraints, hence allowing the use of standard convex optimization tools to solve this convex optimization problem.
	However, the performance of this sampling-based method critically depends on the number of sampling points.
    An insufficient number of sampling points fail to adequately traverse the continuous uncertainty set $\mathcal{A}_{\rm E}$ and thus cannot guarantee the eavesdropping rate constraint~\eqref{C:Secrecy_S} in the uncertainty set.
	On the other hand, a huge number of sampling points provides a more accurate approximation for the uncertainty set, while it incurs prohibitively high computational complexity proportional to the number of sampling points, hence making it unaffordable in practice.

	\subsubsection{Error-bound-based method}
	\begin{figure}[t]
		\centering
		\includegraphics[width=0.3\textwidth]{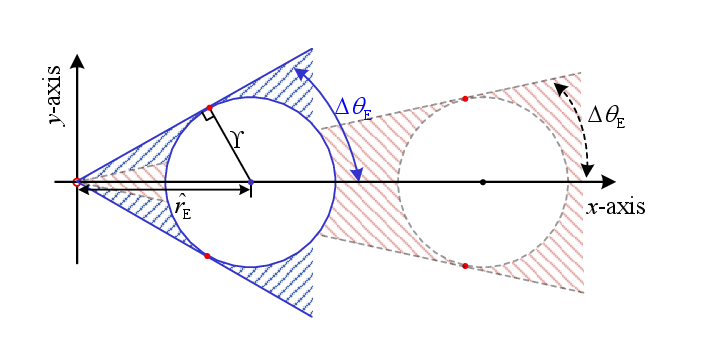}
		\caption{Schematic of near-field angular-error amplification.} \label{Fig:Angerr_amp}
		\vspace{-12pt}
	\end{figure}
	
	For the error-bound-based method, the CSV error bound is utilized to transform the infinite number of constraints into a single LMI~\cite{GuiZhou_Robust}. Specifically, the CSV of Eve at an arbitrary location in $\mathcal{A}_{\rm E}$ is approximated as follows by its  first-order Taylor expansion at location $(\hat{\theta}_{\rm E},\hat{r}_{\rm E})$~\cite{li2023robust}
	\begin{align}\label{Exp:NFV_Taylor}
		\mathbf{a}(\theta_{\rm E},r_{\rm E})&\approx \mathbf{a}(\hat{\theta}_{\rm E},\hat{r}_{\rm E})\nonumber\\
		&~~~~+  \triangledown_{\theta}\mathbf{a}|_{(\hat{\theta}_{\rm E},\hat{r}_{\rm E})}\Delta\theta_{\rm E}
		+ \triangledown_{r}\mathbf{a}|_{(\hat{\theta}_{\rm E},\hat{r}_{\rm E})}\Delta r_{\rm E},
	\end{align}
	where $\triangledown_{\theta}\mathbf{a}$ and $ \triangledown_{r}\mathbf{a} $ are gradients of $\mathbf{a}(\theta,r)$ w.r.t. $\theta$ and ${r}$, respectively (detailed in Section~\ref{Sec:III-B}).
	Based on the above,  the \emph{maximum} CSV error  can be upper-bounded as
	\begin{align}\label{Exp:TaylorBoundofvec}
		&\max_{ (\theta_{\rm E},r_{\rm E}) \in \mathcal{A}_{\rm E} }  \big\|\mathbf{a}(\theta_{\rm E},r_{\rm E})- \mathbf{a}(\hat{\theta}_{\rm E},\hat{r}_{\rm E})\big\|_2 \triangleq \varepsilon^{(\rm ub)} \nonumber \\ 
		= & \max_{ (\theta_{\rm E},r_{\rm E}) \in \mathcal{A}_{\rm E} } \|  \triangledown_{\theta}\mathbf{a}|_{(\hat{\theta}_{\rm E},\hat{r}_{\rm E})}\Delta\theta_{\rm E}
		+ \triangledown_{r}\mathbf{a}|_{(\hat{\theta}_{\rm E},\hat{r}_{\rm E})}\Delta r_{\rm E} \|_{2} \nonumber\\
		\le  &\underbrace{\| \triangledown_{\theta}\mathbf{a}|_{(\hat{\theta}_{\rm E},\hat{r}_{\rm E})}\arcsin(\Upsilon/\hat{r}_{\rm E}) \|_2}_{\varepsilon_{\theta,{\rm Tayl}}^{(\rm ub)}} + \underbrace{\|\triangledown_{r}\mathbf{a}|_{(\hat{\theta}_{\rm E},\hat{r}_{\rm E})} \Upsilon \|_2}_{\varepsilon_{r,{\rm Tayl}}^{(\rm ub)} }  
		\triangleq  \varepsilon_{\rm Tayl}^{(\rm ub)}, 
	\end{align} 
	where $\varepsilon^{(\rm ub)} \le  \varepsilon_{\rm Tayl}^{(\rm ub)} $ is due to the triangle inequality.
	Based on~\eqref{Exp:TaylorBoundofvec}, constraint~\eqref{C:Secrecy_S} can be rewritten as the following LMI by using GSD~\cite{GuiZhou_Robust}\footnote{Note that this reformulation can be obtained based on the GSD lemma in~\cite{GuiZhou_Robust}, with detailed derivation provided in Section~\ref{Sec:III-B2}.}
	\begin{align}\label{Exp:Con_LMI}
		\mathbf{G}_{\rm E}\triangleq \left[
		\begin{array}{c@{~}c@{~}c@{}}
			{\Gamma} - \hat{\lambda}_{\rm E} &  \mathbf{a}(\hat{\theta}_{\rm E},\hat{r}_{\rm E})^H\mathbf{w} & \mathbf{0}_{(1\times N)}  \\
			\mathbf{w}^H\mathbf{a}(\hat{\theta}_{\rm E},\hat{r}_{\rm E})  &     1     & \varepsilon_{\rm Tayl}^{(\rm ub)} \mathbf{w}^H \\
			\mathbf{0}_{(N \times 1)} &\varepsilon_{\rm Tayl}^{(\rm ub)} \mathbf{w}  & \hat{\lambda}_{\rm E} \mathbf{I}_{(N\times N)} 
		\end{array}
		\right] \! \!\succeq\! \mathbf{0},
	\end{align}
	where $\hat{\lambda}_{\rm E}\ge 0 $ is an auxiliary variable.

	For the considered location error model in~\eqref{Exp:Loc_Eves}, given the location error parameter $\Upsilon$ (or $\sigma_{{\rm c}}$), the angle error (i.e., $ \Delta \theta_{\rm E} = \arcsin(\Upsilon/\hat{r}_{\rm E})$) may become very large when the estimated range of Eve is relatively small, which is termed as the near-field \emph{angular-error amplification} effect, as shown in Fig.~\ref{Fig:Angerr_amp}. This effect imposes two limitations on the performance of the error-bound-based method. \emph{First}, the first-order Taylor approximation of $\mathbf{a}(\theta_{\rm E}, r_{\rm E})$ at point ($\hat{\theta}_{\rm E},\hat{r}_{\rm E}$) (see~\eqref{Exp:NFV_Taylor}) may not be accurate for all potential locations of Eve $(\theta_{\rm E}, r_{\rm E})\in \mathcal{A}_{\rm E}$, and the approximation error becomes significant when the angle error ($\Delta \theta_{\rm E}$) and/or range error ($\Delta r_{\rm E}$) are relatively large~\cite{er1994new}, as illustrated in Fig.~\ref{Fig4:CSVofTaylor}(a). \emph{Second}, the performance of the error-bound-based method is constrained by the magnitude of the error bound (i.e., $\varepsilon_{\rm Tayl}^{(\rm ub)}$), which tends to increase with the decreasing range $\hat{r}_{\rm E}$ due to the angle-error amplification effect, hence resulting in a \emph{conservative} beamforming design given secrecy requirements, as presented below.
	
	\begin{figure}[t]
	\centering
	\begin{subfigure}{0.49\linewidth}
		\centering
		\includegraphics[width=1\linewidth]{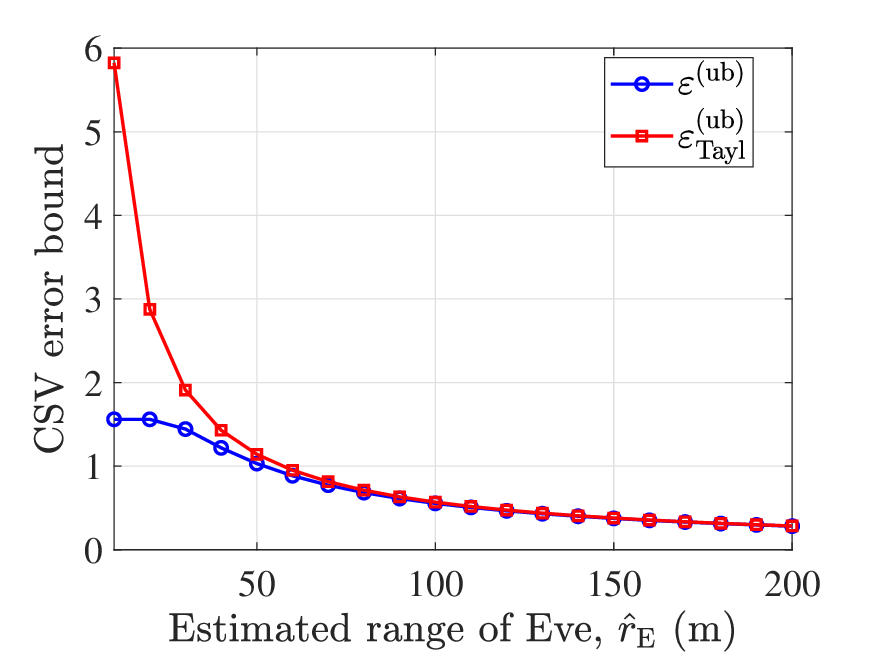}
		\caption{CSV error bound versus $\hat{r}_{\rm E}$ with $\sigma_{c} = 0.1$.}
		\label{Fig:Sec3_TaylBou}
	\end{subfigure}
	\begin{subfigure}{0.49\linewidth}
		\vspace{0em}
		\centering
		\includegraphics[width=1\linewidth]{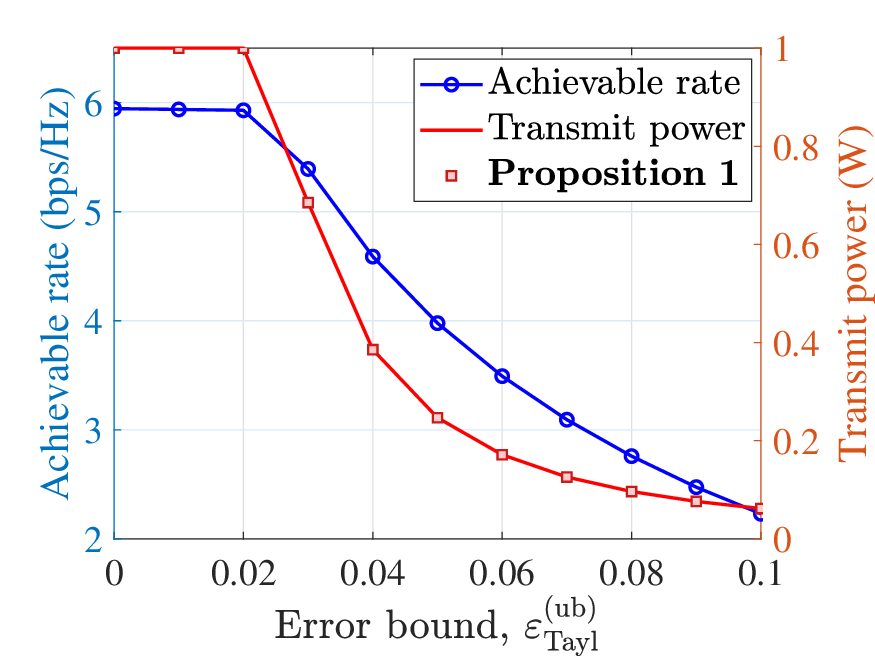}
		\caption{Achievable rate and transmit power versus error bound.}
		\label{Fig:Sec3_Boundrate_and_power}
	\end{subfigure}
	\caption{CSV error bound, and corresponding achievable rate and transmit power.}
	\vspace{-20pt}
	\label{Fig4:CSVofTaylor}
	\end{figure}

	\begin{proposition}\label{Pro:Power_limit}
		\rm
		For the  error-bound-based method, given $\Gamma$, the error bound $\varepsilon_{\rm Tayl}^{(\rm ub)}$ in~\eqref{Exp:TaylorBoundofvec}, and estimated Eve location $(\hat{\theta}_{\rm E},\hat{r}_{\rm E})$,  the beamforming vector $\mathbf{w}$ should satisfy $\|\mathbf{w}\|_{2}^2 \le \Gamma/(\varepsilon_{\rm Tayl}^{(\rm ub)})^2$ under the LMI constraint \eqref{Exp:Con_LMI}.
	\end{proposition}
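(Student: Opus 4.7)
The plan is to read the positive semidefiniteness of the LMI $\mathbf{G}_{\rm E} \succeq \mathbf{0}$ in \eqref{Exp:Con_LMI} through the Schur complement, and then extract the bound $\|\mathbf{w}\|_{2}^{2}\le \Gamma/(\varepsilon_{\rm Tayl}^{(\rm ub)})^{2}$ purely from the diagonal entries of the resulting Schur complement block. Concretely, I would partition $\mathbf{G}_{\rm E}$ into the $2\times 2$ top-left block
\begin{equation*}
\mathbf{A}= \begin{bmatrix} \Gamma-\hat{\lambda}_{\rm E} & \mathbf{a}^{H}(\hat{\theta}_{\rm E},\hat{r}_{\rm E})\mathbf{w}\\ \mathbf{w}^{H}\mathbf{a}(\hat{\theta}_{\rm E},\hat{r}_{\rm E}) & 1 \end{bmatrix},
\end{equation*}
the off-diagonal block $\mathbf{B}=[\mathbf{0};\,\varepsilon_{\rm Tayl}^{(\rm ub)}\mathbf{w}^{H}]$, and the bottom-right block $\mathbf{C}=\hat{\lambda}_{\rm E}\mathbf{I}$. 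A necessary condition for $\mathbf{G}_{\rm E}\succeq \mathbf{0}$ is $\mathbf{C}\succeq \mathbf{0}$, i.e., $\hat{\lambda}_{\rm E}\ge 0$.

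Next, assuming $\hat{\lambda}_{\rm E}>0$ for now (the degenerate case $\hat{\lambda}_{\rm E}=0$ I will handle separately below), I would compute the Schur complement
\begin{equation*}
\mathbf{A}-\mathbf{B}\mathbf{C}^{-1}\mathbf{B}^{H}
=\begin{bmatrix} \Gamma-\hat{\lambda}_{\rm E} & \mathbf{a}^{H}\mathbf{w}\\ \mathbf{w}^{H}\mathbf{a} & 1-(\varepsilon_{\rm Tayl}^{(\rm ub)})^{2}\|\mathbf{w}\|_{2}^{2}/\hat{\lambda}_{\rm E} \end{bmatrix}\succeq \mathbf{0}.
\end{equation*}
For a $2\times 2$ Hermitian matrix to be PSD, both diagonal entries must be nonnegative. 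The first gives $\hat{\lambda}_{\rm E}\le \Gamma$, and the second gives $\hat{\lambda}_{\rm E}\ge (\varepsilon_{\rm Tayl}^{(\rm ub)})^{2}\|\mathbf{w}\|_{2}^{2}$. Chaining these two inequalities immediately yields $(\varepsilon_{\rm Tayl}^{(\rm ub)})^{2}\|\mathbf{w}\|_{2}^{2}\le \hat{\lambda}_{\rm E}\le \Gamma$, and eliminating $\hat{\lambda}_{\rm E}$ gives the claimed bound $\|\mathbf{w}\|_{2}^{2}\le \Gamma/(\varepsilon_{\rm Tayl}^{(\rm ub)})^{2}$.

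The edge case $\hat{\lambda}_{\rm E}=0$ needs a brief separate treatment: then $\mathbf{C}$ is singular and the generalized Schur condition requires $\mathbf{B}$ to lie in the range of $\mathbf{C}$, which forces $\varepsilon_{\rm Tayl}^{(\rm ub)}\mathbf{w}=\mathbf{0}$, so either $\varepsilon_{\rm Tayl}^{(\rm ub)}=0$ (in which case the claim is vacuous) or $\mathbf{w}=\mathbf{0}$ (in which case the bound holds trivially). I do not anticipate any genuine obstacle here; the main subtlety is simply to flag that only the two diagonal conditions of the Schur complement are being used (the determinant condition is not needed for the claimed bound), and to recognise that once $\hat{\lambda}_{\rm E}$ is treated as a free auxiliary variable it can be eliminated by the transitive squeeze between $(\varepsilon_{\rm Tayl}^{(\rm ub)})^{2}\|\mathbf{w}\|_{2}^{2}$ and $\Gamma$. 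This also makes clear, conceptually, why the bound appears: the LMI encodes the worst-case secrecy constraint $|\mathbf{a}^{H}(\theta_{\rm E},r_{\rm E})\mathbf{w}|^{2}\le \Gamma$, and a minimal requirement for that to be satisfiable is that $\|\mathbf{w}\|_{2}\,\varepsilon_{\rm Tayl}^{(\rm ub)}$, which upper-bounds the possible CSV-error-induced leakage term, does not itself exceed $\sqrt{\Gamma}$.
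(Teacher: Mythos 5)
Your proof is correct and follows essentially the same route as the paper's Appendix~\ref{App:PowerUB}: both extract the two necessary conditions $\hat{\lambda}_{\rm E}\le\Gamma$ and $\hat{\lambda}_{\rm E}\ge(\varepsilon_{\rm Tayl}^{(\rm ub)})^{2}\|\mathbf{w}\|_{2}^{2}$ from positive semidefiniteness of sub-blocks of $\mathbf{G}_{\rm E}$ (the paper via two principal-submatrix determinant inequalities and the block-determinant identity, you via the Schur complement with respect to $\hat{\lambda}_{\rm E}\mathbf{I}$ and its diagonal entries) and then eliminate $\hat{\lambda}_{\rm E}$ by the same squeeze. Your explicit handling of the degenerate case $\hat{\lambda}_{\rm E}=0$ is a small refinement that the paper sidesteps by assuming $\hat{\lambda}_{\rm E}>0$.
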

	\vspace{-5pt}
	\begin{proof}
		Please refer to Appendix~\ref{App:PowerUB}.
	\end{proof}

	As shown in Fig.~\ref{Fig4:CSVofTaylor}(b), given the estimated location of Eve at $(10, 0)$ m, the optimized transmit power of the error-bound-based method monotonically decreases with the CSV error bound, as stated in {\bf{Proposition 1}}. Notably, even for a relatively small CSV error bound (e.g., 0.1, which is much smaller than that in Fig.~\ref{Fig4:CSVofTaylor}(a)), the allocated BS transmit power approaches zero, resulting in a substantially degraded achievable rate.

	\vspace{-6pt}
	\subsection{Proposed Method}\label{Sec:III-B}
	In this subsection, we propose an efficient near-field robust secure beamforming method that achieves superior PLS performance with low complexity, hence outperforming both conventional methods in Section~\ref{Sec:III-A}. Essentially, our proposed method first determines a valid region where the first-order Taylor approximation of CSV is accurate enough, based on which the uncertainty region is divided into a finite and small number of sub-regions for designing robust beamforming.

	\subsubsection{Conditions for accurate first-order Taylor approximation}\label{Sec:III-B1}
	To overcome the limitations of the error-bound-based method in Section~\ref{Sec:III-A}, we first characterize the CSV errors caused by angle and range deviations, and then establish the conditions under which the first-order Taylor approximation is accurate.
	
	Specifically, in the spherical uncertainty region $\mathcal{A}_{\rm E}$, the CSV error of Eve is 
	\begin{align}
		&\|\mathbf{a}({\theta}_{\rm E},r_{\rm E})-\mathbf{a}(\hat{\theta}_{\rm E},\hat{r}_{\rm E})\|_{2} \nonumber \\
		= &  \sqrt{2-2\mathcal{R}\{ \mathbf{a}^H(\theta_{{\rm E}},r_{\rm E}) \mathbf{a}(\hat{\theta}_{{\rm E}},\hat{r}_{\rm E}) \} }  
		\triangleq \varepsilon.\label{Exp:vecBound}
	\end{align}
	Then, we obtain the CSV errors in the angle/range domain, assuming no range/angle errors, respectively. 
	\begin{lemma}[CSV error in the range domain]\label{Lem:Rangebound}
		\rm Given the spherical uncertainty region $\mathcal{A}_{\rm E}$ and $\theta_{\rm E} = \hat{\theta}_{\rm E}$, the CSV error of Eve is given by
		\begin{align}\label{Exp:rangeBound}
			\varepsilon_{r} \triangleq \sqrt{2 - 2{C(\beta)}/{\beta}},
		\end{align}
		where $\beta = \sqrt{\frac{N^2d^2 (1-\sin^2\hat{\theta}_{\rm E}) }{2\lambda}\Big|\frac{1}{r_{\rm E}}-\frac{1}{\hat{r}_{\rm E}} \Big| }$ and $C(\beta)=\int_{0}^{\beta} \cos(\frac{\pi}{2}t^2) dt $.
		Moreover, 
		when $\Delta r_{\rm E}/\hat{r}_{\rm E}$ is small (e.g., $\Delta r_{\rm E}\le 0.1\hat{r}_{\rm E}$),  
		$\varepsilon_{r}$ in~\eqref{Exp:rangeBound} can be approximated as
		\begin{align}\label{Exp:linearR}
			\varepsilon_{r} \approx f_{r} (\hat{\theta}_{\rm E},\hat{r}_{\rm E}) \Delta r_{\rm E}  \triangleq \varepsilon_{r,{\rm appr}},
		\end{align}
		where $f_{r} (\hat{\theta}_{\rm E},\hat{r}_{\rm E}) =\frac{\pi N^2d^2 (1-\sin^2\hat{\theta}_{\rm E}) }{2\sqrt{20}\lambda \hat{r}_{\rm E}^2 }$.
	\end{lemma}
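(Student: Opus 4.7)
The plan is to start from the definition of the CSV error $\varepsilon$ in~\eqref{Exp:vecBound} specialized to $\theta_{\rm E} = \hat{\theta}_{\rm E}$, so that the task reduces to evaluating the inner product $\mathbf{a}^H(\hat{\theta}_{\rm E}, r_{\rm E})\mathbf{a}(\hat{\theta}_{\rm E}, \hat{r}_{\rm E})$. Substituting the CSV entries from~\eqref{Exp:NF_steeringvec}, the linear-in-$u_n$ (angle-domain) phases cancel and only the quadratic-in-$u_n$ (range-domain) phases survive, leaving a single quadratic-phase sum
\begin{align*}
\mathbf{a}^H(\hat{\theta}_{\rm E}, r_{\rm E})\mathbf{a}(\hat{\theta}_{\rm E}, \hat{r}_{\rm E}) = \frac{1}{N}\sum_{n=1}^{N}e^{j\alpha u_n^2},\quad \alpha \triangleq \frac{\pi(1-\sin^2\hat{\theta}_{\rm E})}{\lambda}\Bigl(\frac{1}{r_{\rm E}} - \frac{1}{\hat{r}_{\rm E}}\Bigr).
\end{align*}
This reduction is the first key step and makes clear why only range-dependent information enters the final expression.

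Next I would approximate the discrete Riemann sum by a continuous integral over the array aperture, $\frac{1}{N}\sum_n e^{j\alpha u_n^2}\approx \frac{1}{L}\int_{-L/2}^{L/2}e^{j\alpha u^2}du$ with $L = Nd$, which is well justified in the XL-array regime of large $N$ and half-wavelength spacing $d=\lambda/2$. Then applying the change of variable $t = u\sqrt{2|\alpha|/\pi}$ reshapes the exponent into $\pm j\pi t^{2}/2$ and produces the Fresnel cosine integral. A short calculation shows that the Jacobian $\sqrt{\pi/(2|\alpha|)}$ cancels exactly with $1/L$ upon identifying the upper limit $t_{\max} = (Nd/2)\sqrt{2|\alpha|/\pi}$, which simplifies to $\beta$ as defined in the lemma. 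Taking the real part yields $\mathcal{R}\{\mathbf{a}^H(\hat{\theta}_{\rm E}, r_{\rm E})\mathbf{a}(\hat{\theta}_{\rm E}, \hat{r}_{\rm E})\}\approx C(\beta)/\beta$; substituting back into~\eqref{Exp:vecBound} produces the first claim~\eqref{Exp:rangeBound}.

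For the linear approximation~\eqref{Exp:linearR}, I would note that $\Delta r_{\rm E}/\hat{r}_{\rm E}\ll 1$ gives $\tfrac{1}{r_{\rm E}} - \tfrac{1}{\hat{r}_{\rm E}} = -\Delta r_{\rm E}/(r_{\rm E}\hat{r}_{\rm E}) \approx -\Delta r_{\rm E}/\hat{r}_{\rm E}^{2}$, so $\beta^2\approx \frac{N^{2}d^{2}(1-\sin^{2}\hat{\theta}_{\rm E})}{2\lambda\hat{r}_{\rm E}^{2}}|\Delta r_{\rm E}|$ is small. Taylor-expanding $\cos(\pi t^2/2) = 1 - \pi^{2}t^{4}/8 + O(t^{8})$ and integrating term by term yields $C(\beta) = \beta - \pi^{2}\beta^{5}/40 + O(\beta^{9})$, hence $2 - 2C(\beta)/\beta \approx \pi^{2}\beta^{4}/20$. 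Taking the square root and substituting $\beta^{2}$ gives $\varepsilon_{r}\approx (\pi/\sqrt{20})\beta^{2} = f_{r}(\hat{\theta}_{\rm E},\hat{r}_{\rm E})|\Delta r_{\rm E}|$, matching the stated prefactor.

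The main obstacle I expect is controlling the continuum approximation in the second step. The summand is highly oscillatory, and for moderately large $|\alpha|$ (small $\hat{r}_{\rm E}$ or not-so-small $\Delta r_{\rm E}$) the per-antenna phase increment $\alpha(u_{n+1}^{2}-u_{n}^{2})$ can be nontrivial, so rigorously bounding $|\frac{1}{N}\sum_n e^{j\alpha u_n^2} - \frac{1}{L}\int_{-L/2}^{L/2}e^{j\alpha u^2}du|$ requires either a midpoint/trapezoidal estimate for oscillatory integrals or an Euler--Maclaurin remainder, restricted to the XL-array regime where this phase step stays $O(1/N)$. Once this is handled, the Fresnel-integral algebra and the small-argument Taylor expansion are routine.
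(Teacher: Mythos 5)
Your proposal is correct and follows essentially the same route as the paper: the paper's Appendix~\ref{App:rangeBound} simply cites~\cite{Cui2022CE} for the Fresnel-integral correlation $\mathbf{a}^H(\hat{\theta}_{\rm E},r_{\rm E})\mathbf{a}(\hat{\theta}_{\rm E},\hat{r}_{\rm E})\approx \big(C(\beta)+\jmath S(\beta)\big)/\beta$, whereas you re-derive it via the sum-to-integral approximation and the change of variables (your Jacobian/$\beta$ bookkeeping checks out), and your small-$\beta$ expansion $C(\beta)/\beta\approx 1-\pi^2\beta^4/40$ together with $1/r_{\rm E}-1/\hat{r}_{\rm E}\approx-\Delta r_{\rm E}/\hat{r}_{\rm E}^2$ matches the paper's linearization exactly. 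The discretization-error issue you flag is real but is likewise left informal in the paper and its cited reference, so it does not constitute a gap relative to the paper's own argument.
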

	\vspace{-5pt}
	\begin{proof}
		Please refer to Appendix~\ref{App:rangeBound}.
	\end{proof}
	
	\begin{lemma}[CSV error in the angle domain]\label{Lem:Anglebound}
		\rm Given the spherical uncertainty region $\mathcal{A}_{\rm E}$ and $ r_{\rm E} = \hat{r}_{\rm E}$, the CSV error of Eve is given~by 
		\begin{align}\label{Exp:angleBound}
			\varepsilon_{\theta} \triangleq \sqrt{2 - 2 \varpi},
		\end{align}
		where
		\begin{equation}
			\varpi  = \left\{
			\begin{aligned}
				&-\frac{1}{N \sin(\frac{3\pi}{2N})},\quad\textrm{if}~\sin\theta_{\rm E}\ge \sin\hat{\theta}_{\rm E} + \frac{3}{N}, \\
				&\frac{\sin\big(\frac{1}{2} N \pi (\sin\theta_{\rm E}-\sin\hat{\theta}_{\rm E}) \big)}{N \sin\big(\frac{1}{2}\pi (\sin\theta_{\rm E}-\sin\hat{\theta}_{\rm E}) \big)},~~ 	\textrm{otherwise}.
			\end{aligned} 
			\right.
		\end{equation}
		Moreover, when $ \sin\theta_{\rm E}-\sin\hat{\theta}_{\rm E} \le \frac{1}{2N}$,  $\varepsilon_{\theta}$ in~\eqref{Exp:angleBound} can be approximated~as
		\begin{align}\label{Exp:linearTheta}
			\varepsilon_{\theta} \approx f_{\theta}(\hat{\theta}_{\rm E})\Delta \theta_{\rm E} \triangleq \varepsilon_{\theta,{\rm appr}},
		\end{align}
		where $ f_{\theta}(\hat{\theta}_{\rm E})=   
		\frac{\pi N\cos \hat{\theta}_{{\rm E}} }{\sqrt{12}} $.
	\end{lemma}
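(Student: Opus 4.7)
Following the template of \textbf{Lemma~\ref{Lem:Rangebound}}, I would evaluate the inner product in~\eqref{Exp:vecBound} under $r_{\rm E}=\hat r_{\rm E}$ and reduce it to a Dirichlet kernel. Substituting the near-field CSV~\eqref{Exp:NF_steeringvec}, the surviving quadratic phase $(u_n^2/(2\hat r_{\rm E}))(\cos^2\hat\theta_{\rm E}-\cos^2\theta_{\rm E})$ is dominated by the linear-in-$u_n$ contribution for angles close to $\hat\theta_{\rm E}$ and can be dropped, giving
\begin{align}
\mathbf{a}^H(\theta_{\rm E},\hat r_{\rm E})\mathbf{a}(\hat\theta_{\rm E},\hat r_{\rm E})\approx\frac{1}{N}\sum_{n=1}^{N}e^{j\frac{2\pi}{\lambda}u_n(\sin\hat\theta_{\rm E}-\sin\theta_{\rm E})}.
\end{align}
With $u_n=(2n-N-1)d/2$ and $d=\lambda/2$, the exponent becomes $j(2n-N-1)\alpha$ where $\alpha=\tfrac{\pi}{2}(\sin\hat\theta_{\rm E}-\sin\theta_{\rm E})$, and the standard symmetric geometric-series identity collapses the sum to $\sin(N\alpha)/[N\sin\alpha]$. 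Because sine is odd, this is exactly the ``otherwise'' expression for $\varpi$ in~\eqref{Exp:angleBound}, and substituting into $\varepsilon_\theta=\sqrt{2-2\varpi}$ closes that branch.

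For the regime $\sin\theta_{\rm E}\ge\sin\hat\theta_{\rm E}+3/N$, the condition translates to $|N\alpha|\ge 3\pi/2$, meaning the beampattern has left the main lobe. I would argue that the worst-case (smallest) value of $\sin(N\alpha)/[N\sin\alpha]$ that maximizes $\varepsilon_\theta$ is attained at the first negative sidelobe $|N\alpha|=3\pi/2$, since subsequent sidelobes are enveloped by $1/[N\sin\alpha]$, which is strictly decreasing in $|\alpha|$ over $(0,\pi)$. Evaluating at $|N\alpha|=3\pi/2$ gives $\sin(N\alpha)=-1$ and denominator $N\sin(3\pi/(2N))$, producing the stated $\varpi=-1/[N\sin(3\pi/(2N))]$.

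For the linear approximation~\eqref{Exp:linearTheta}, I would Taylor expand numerator and denominator to third order, $\sin(N\alpha)\approx N\alpha-(N\alpha)^3/6$ and $N\sin\alpha\approx N\alpha-N\alpha^3/6$. Dividing and simplifying gives $\varpi\approx 1-(N^2-1)\alpha^2/6\approx 1-N^2\alpha^2/6$ for large $N$, which is accurate in the stated regime $|\alpha|\le\pi/(4N)$, i.e., $|\sin\theta_{\rm E}-\sin\hat\theta_{\rm E}|\le 1/(2N)$. Thus $\varepsilon_\theta\approx N|\alpha|/\sqrt 3$, and a first-order expansion $\sin\theta_{\rm E}-\sin\hat\theta_{\rm E}\approx\cos\hat\theta_{\rm E}\,\Delta\theta_{\rm E}$ delivers the claimed $f_\theta(\hat\theta_{\rm E})=\pi N\cos\hat\theta_{\rm E}/\sqrt{12}$.

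The main obstacle I foresee is rigorously justifying that the quadratic near-field phase can be dropped from the inner product once the angles differ: its residual scales as $(N^2 d^2/\hat r_{\rm E})\sin\hat\theta_{\rm E}\cos\hat\theta_{\rm E}\cdot\Delta\theta_{\rm E}$, which is not automatically dominated by the linear term near the Fresnel distance, so an operating regime such as $\hat r_{\rm E}\gg N^2d^2/\lambda$ may have to be tacitly invoked or the bound widened to absorb this term. A secondary subtlety is making the sidelobe-envelope monotonicity argument precise enough that it applies uniformly over the admissible $\theta_{\rm E}$ rather than only to the discrete sidelobe peaks.
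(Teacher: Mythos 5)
Your proposal is correct and follows essentially the same route as the paper's proof in Appendix~C: drop the quadratic near-field phase under the small-angle-error assumption, collapse the inner product to the Dirichlet kernel $\Xi(\sin\theta_{\rm E}-\sin\hat{\theta}_{\rm E})$, take the first-negative-sidelobe value $-1/\big(N\sin\tfrac{3\pi}{2N}\big)$ for the branch $\sin\theta_{\rm E}\ge\sin\hat{\theta}_{\rm E}+3/N$, and Taylor-expand to obtain $\varpi\approx 1-\tfrac{\pi^2N^2\xi^2}{24}$ and hence $f_{\theta}(\hat{\theta}_{\rm E})=\tfrac{\pi N\cos\hat{\theta}_{\rm E}}{\sqrt{12}}$ via $\xi\approx\cos\hat{\theta}_{\rm E}\,\Delta\theta_{\rm E}$. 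The quadratic-phase residual you flag is treated in the paper by the same informal small-angle-error appeal, so your argument matches the paper's level of rigor.
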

	\vspace{-5pt}
	\begin{proof}
		Please refer to Appendix~\ref{App:angleBound}.
	\end{proof}
	
	The above two lemmas indicate that both $\varepsilon_{r}$ and $\varepsilon_{\theta}$ can be expressed as a \emph{linear} function w.r.t. $\Delta r_{\rm E}$ and $\Delta \theta_{\rm E}$,  when $ \Delta r_{\rm E} / \hat{r}_{\rm E}$ is small and $ \sin\theta_{\rm E} - \sin\hat{\theta}_{\rm E} \le \frac{1}{2N}$, respectively. 
	Based on the above, we obtain the conditions for which the first-order Taylor approximation of CSV is largely accurate.
	\begin{proposition}[Conditions for accurate first-order Taylor approximation]\label{Pro:JointTaylor}
		\rm 
		When $ \Delta r_{\rm E} / \hat{r}_{\rm E}$ is small (e.g., $\Delta r_{\rm E} / \hat{r}_{\rm E} \le 0.1$) and $ \sin\theta_{\rm E} - \sin\hat{\theta}_{\rm E} \le \frac{1}{2N}$, the first-order Taylor approximation
		$$\mathbf{a}(\theta_{\rm E},r_{\rm E})\approx \mathbf{a}(\hat{\theta}_{\rm E},\hat{r}_{\rm E}) +  \triangledown_{\theta}\mathbf{a}|_{(\hat{\theta}_{\rm E},\hat{r}_{\rm E})}\Delta\theta_{\rm E}
		+ \triangledown_{r}\mathbf{a}|_{(\hat{\theta}_{\rm E},\hat{r}_{\rm E})}\Delta r_{\rm E}$$
		is accurate.   Herein,  $\triangledown_{\theta}\mathbf{a}$ and $ \triangledown_{r}\mathbf{a} $ are the gradients of the CSV $\mathbf{a}(\theta,r)$ w.r.t. $\theta$ and ${r}$, which are given by
		\begin{subequations}\label{Exp:Gradient}
			\begin{align}
				\triangledown_{\theta}\mathbf{a} &= \jmath\cdot \mathbf{a} \odot [\frac{\partial \phi_1}{\partial \theta},\ldots,\frac{\partial \phi_N}{\partial \theta}]^T, \label{Exp:Taylorwrttheta}\\
				\triangledown_{r}\mathbf{a} &= \jmath\cdot \mathbf{a} \odot [\frac{\partial \phi_1}{\partial r},\ldots,\frac{\partial \phi_N}{\partial r}]^T, \label{Exp:Taylorwrtr}
			\end{align}
		\end{subequations}
		respectively, with $\phi_n \!=\! \frac{2\pi}{\lambda}(u_{n}\sin\theta \!-\! \frac{u_{n}^2\cos^2\theta}{2r})$, $\frac{\partial \phi_n}{\partial \theta} = \frac{2\pi}{\lambda} \left( u_n \cos \theta + \frac{u_n^2 \sin \theta \cos \theta}{r} \right)$, and  $ \frac{\partial \phi_n}{\partial r} =\frac{\pi u_n^2 \cos^2\theta}{\lambda r^2} $.
	\end{proposition}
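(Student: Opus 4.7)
The plan is to prove Proposition~\ref{Pro:JointTaylor} by combining Lemmas~\ref{Lem:Rangebound} and~\ref{Lem:Anglebound} through a two-step perturbation argument, and to verify the gradient formulas in~\eqref{Exp:Gradient} by elementary differentiation of the steering-vector entries.

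First, I would derive the gradient expressions. Starting from $[\mathbf{a}(\theta,r)]_n = \frac{1}{\sqrt{N}}e^{\jmath\phi_n}$ in~\eqref{Exp:NF_steeringvec} with $\phi_n = \frac{2\pi}{\lambda}\big(u_n\sin\theta - \frac{u_n^2\cos^2\theta}{2r}\big)$, the chain rule gives $\partial_\theta[\mathbf{a}]_n = \jmath(\partial_\theta\phi_n)[\mathbf{a}]_n$ and $\partial_r[\mathbf{a}]_n = \jmath(\partial_r\phi_n)[\mathbf{a}]_n$, which immediately yields the Hadamard-product form in~\eqref{Exp:Taylorwrttheta}--\eqref{Exp:Taylorwrtr}. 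The scalar derivatives $\partial_\theta\phi_n = \frac{2\pi}{\lambda}(u_n\cos\theta + \frac{u_n^2\sin\theta\cos\theta}{r})$ and $\partial_r\phi_n = \frac{\pi u_n^2\cos^2\theta}{\lambda r^2}$ follow by direct differentiation of $\phi_n$.

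Next, to establish the accuracy of the joint first-order Taylor expansion, I would introduce the intermediate point $(\hat{\theta}_{\rm E}, r_{\rm E})$ and apply the triangle inequality to split the remainder:
\begin{align*}
& \big\|\mathbf{a}(\theta_{\rm E},r_{\rm E}) - \mathbf{a}(\hat{\theta}_{\rm E},\hat{r}_{\rm E}) - \triangledown_\theta\mathbf{a}\,\Delta\theta_{\rm E} - \triangledown_r\mathbf{a}\,\Delta r_{\rm E}\big\|_2 \\
& \le \big\|\mathbf{a}(\hat{\theta}_{\rm E},r_{\rm E}) - \mathbf{a}(\hat{\theta}_{\rm E},\hat{r}_{\rm E}) - \triangledown_r\mathbf{a}|_{(\hat{\theta}_{\rm E},\hat{r}_{\rm E})}\Delta r_{\rm E}\big\|_2 \\
& \quad + \big\|\mathbf{a}(\theta_{\rm E},r_{\rm E}) - \mathbf{a}(\hat{\theta}_{\rm E},r_{\rm E}) - \triangledown_\theta\mathbf{a}|_{(\hat{\theta}_{\rm E},\hat{r}_{\rm E})}\Delta\theta_{\rm E}\big\|_2.
\end{align*}
By Lemma~\ref{Lem:Rangebound}, whenever $\Delta r_{\rm E}/\hat{r}_{\rm E}$ is small the pure-range error $\varepsilon_r$ is well approximated by the linearization $\varepsilon_{r,\rm appr} = f_r(\hat{\theta}_{\rm E},\hat{r}_{\rm E})\Delta r_{\rm E}$, and a direct computation using $\sum_{n=1}^{N} u_n^4\approx N^5 d^4/80$ shows that $f_r(\hat{\theta}_{\rm E},\hat{r}_{\rm E})$ coincides with $\|\triangledown_r\mathbf{a}|_{(\hat{\theta}_{\rm E},\hat{r}_{\rm E})}\|_2$, so the first bracketed term on the right-hand side is negligible. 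Lemma~\ref{Lem:Anglebound} analogously controls the second term whenever $\sin\theta_{\rm E}-\sin\hat{\theta}_{\rm E}\le\frac{1}{2N}$, since the analogous identity $f_\theta(\hat{\theta}_{\rm E}) = \|\triangledown_\theta\mathbf{a}|_{(\hat{\theta}_{\rm E},\hat{r}_{\rm E})}\|_2$ follows from $\sum_{n=1}^{N}u_n^2\approx N^3 d^2/12$ together with $d=\lambda/2$.

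The main obstacle is the cross term created by the shift of the base point in the angle-variation step: replacing $\triangledown_\theta\mathbf{a}|_{(\hat{\theta}_{\rm E},r_{\rm E})}$ with $\triangledown_\theta\mathbf{a}|_{(\hat{\theta}_{\rm E},\hat{r}_{\rm E})}$ introduces a residual proportional to $\partial^2_{\theta r}\phi_n\,\Delta\theta_{\rm E}\Delta r_{\rm E}$. I would bound $|\partial^2_{\theta r}\phi_n|\le\frac{2\pi u_n^2|\sin\theta\cos\theta|}{\lambda r^2}$ and combine $\Delta r_{\rm E}\le 0.1\hat{r}_{\rm E}$ with the smallness condition on $\sin\theta_{\rm E}-\sin\hat{\theta}_{\rm E}$ to show that this mixed contribution is a higher-order term of size $O(\Delta\theta_{\rm E}\Delta r_{\rm E}/\hat{r}_{\rm E})$, strictly smaller than each of the first-order terms. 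Assembling the three bounds yields the desired accuracy of the joint first-order Taylor approximation, while the gradient formulas in~\eqref{Exp:Gradient} are verified directly from the differentiation step above.
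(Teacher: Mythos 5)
Your decomposition is the same one the paper uses: insert the intermediate point $(\hat{\theta}_{\rm E},r_{\rm E})$, split the remainder into a pure-range residual and a pure-angle residual, and absorb the base-point shift of the angle gradient as a higher-order mixed term; likewise, your norm identities $f_{r}(\hat{\theta}_{\rm E},\hat{r}_{\rm E})=\|\triangledown_{r}\mathbf{a}|_{(\hat{\theta}_{\rm E},\hat{r}_{\rm E})}\|_{2}$ and $f_{\theta}(\hat{\theta}_{\rm E})=\|\triangledown_{\theta}\mathbf{a}|_{(\hat{\theta}_{\rm E},\hat{r}_{\rm E})}\|_{2}$ are exactly what the paper establishes (its Appendix~\ref{App:TaylorBound}, $\varepsilon_{r,{\rm Tayl}}=\varepsilon_{r,{\rm appr}}$ and $\varepsilon_{\theta,{\rm Tayl}}=\varepsilon_{\theta,{\rm appr}}$). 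However, there is a genuine gap at the decisive step: from ``the norm of the true increment (Lemmas~\ref{Lem:Rangebound}--\ref{Lem:Anglebound}) equals the norm of the linear Taylor term'' you conclude that the residual is negligible. That does not follow. Writing $\mathbf{u}=\mathbf{a}(\hat{\theta}_{\rm E},r_{\rm E})-\mathbf{a}(\hat{\theta}_{\rm E},\hat{r}_{\rm E})$ and $\mathbf{v}=\triangledown_{r}\mathbf{a}|_{(\hat{\theta}_{\rm E},\hat{r}_{\rm E})}\Delta r_{\rm E}$, one has $\|\mathbf{u}-\mathbf{v}\|_{2}^{2}=\|\mathbf{u}\|_{2}^{2}+\|\mathbf{v}\|_{2}^{2}-2\mathcal{R}\{\mathbf{u}^{H}\mathbf{v}\}$, so equal norms only give $\|\mathbf{u}-\mathbf{v}\|_{2}^{2}\approx 2\|\mathbf{v}\|_{2}^{2}-2\mathcal{R}\{\mathbf{u}^{H}\mathbf{v}\}$; smallness additionally requires $\mathcal{R}\{\mathbf{u}^{H}\mathbf{v}\}\approx\|\mathbf{v}\|_{2}^{2}$, i.e., near-alignment of the increment with the gradient direction. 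Two vectors of equal norm can be orthogonal or even anti-aligned, in which case the residual is as large as $\sqrt{2}$ or $2$ times the increment itself. Lemmas~\ref{Lem:Rangebound} and~\ref{Lem:Anglebound} control only magnitudes, not directions, so they cannot by themselves close this step.

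The alignment is not automatic here, because the admissible angle error is not infinitesimal: with $\sin\theta_{\rm E}-\sin\hat{\theta}_{\rm E}$ up to $\tfrac{1}{2N}$, the per-antenna phase increments $\tfrac{2\pi}{\lambda}u_{n}\cos\hat{\theta}_{\rm E}\,\Delta\theta_{\rm E}$ reach about $\pi/4$ at the array edges, so a per-element first-order argument is not obviously tight and the cross term must actually be computed. This is precisely the content of the paper's Appendix~\ref{App:partialTaylor}: it expands $\|\Delta\mathbf{a}_{\theta}\|_{2}^{2}=2\varepsilon_{\theta,{\rm Tayl}}^{2}-2\mathcal{R}\{\mathbf{a}^{H}(\theta_{\rm E},\hat{r}_{\rm E})\triangledown_{\theta}\mathbf{a}|_{(\hat{\theta}_{\rm E},\hat{r}_{\rm E})}\Delta\theta_{\rm E}\}$ and then shows, using $\sin\theta_{\rm E}-\sin\hat{\theta}_{\rm E}\approx\cos\hat{\theta}_{\rm E}\,\Delta\theta_{\rm E}\le\tfrac{1}{2N}$, that the inner-product term is itself $\approx 2\varepsilon_{\theta,{\rm Tayl}}^{2}$, which is what forces the residual to vanish (and similarly in the range domain). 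To repair your proof, keep your decomposition and norm identities, but add this cross-term evaluation for both the range and angle residuals instead of inferring negligibility from norm matching alone.
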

	\begin{proof}
		Let   $ \|\Delta\mathbf{a}_{\rm E} \|_{2}^{2} $ denote the squared Euclidean norm of the approximation error, which is given by
		\begin{align}
			\|\Delta\mathbf{a}_{\rm E} \|_{2}^{2} &\triangleq \| \mathbf{a}(\theta_{\rm E},r_{\rm E}) -  \mathbf{a}(\hat{\theta}_{\rm E},\hat{r}_{\rm E})   \nonumber \\ &-\triangledown_{\theta}\mathbf{a}|_{(\hat{\theta}_{\rm E},\hat{r}_{\rm E})}\Delta\theta_{\rm E}
			- \triangledown_{r}\mathbf{a}|_{(\hat{\theta}_{\rm E},\hat{r}_{\rm E})}\Delta r_{\rm E}\|_{2}^{2}.  \nonumber
		\end{align}
		By introducing the  intermediate variable $\mathbf{a}(\hat{\theta}_{\rm E},{r}_{\rm E})$, $\|\Delta\mathbf{a}_{\rm E} \|_{2}^{2}$ can be rewritten as 
		\begin{align}
			\|\Delta\mathbf{a}_{\rm E} \|_{2}^{2} &=\| \mathbf{a}(\theta_{\rm E},r_{\rm E}) - \mathbf{a}(\hat{\theta}_{\rm E},r_{\rm E}) + \mathbf{a}(\hat{\theta}_{\rm E},r_{\rm E})
			-\mathbf{a}(\hat{\theta}_{\rm E},\hat{r}_{\rm E})  \nonumber\\
			&-\triangledown_{\theta}\mathbf{a}|_{(\hat{\theta}_{\rm E},\hat{r}_{\rm E})}\Delta\theta_{\rm E}
			- \triangledown_{r}\mathbf{a}|_{(\hat{\theta}_{\rm E},\hat{r}_{\rm E})}\Delta r_{\rm E}\|_{2}^{2}. 
		\end{align}
		Considering that $\triangledown_{\theta}\mathbf{a}|_{(\hat{\theta}_{\rm E},\hat{r}_{\rm E})}\Delta\theta_{\rm E} \approx \triangledown_{\theta}\mathbf{a}|_{(\hat{\theta}_{\rm E},{r}_{\rm E})}\Delta\theta_{\rm E} $ under a small range error, the proof of {\bf Proposition~\ref{Pro:JointTaylor}} can be equivalently established by proving 
		\begin{subequations}\label{Exp:partialTaylor}
			\begin{align}
				& \mathbf{a}({\theta}_{\rm E},\hat{r}_{\rm E}) \approx  \mathbf{a}(\hat{\theta}_{\rm E},\hat{r}_{\rm E}) + \triangledown_{\theta}\mathbf{a}|_{(\hat{\theta}_{\rm E},\hat{r}_{\rm E})}\Delta \theta_{\rm E}, \\
				& \mathbf{a}({\hat{\theta}}_{\rm E},r_{\rm E}) \approx \mathbf{a}(\hat{\theta}_{\rm E},\hat{r}_{\rm E}) + \triangledown_{r}\mathbf{a}|_{(\hat{\theta}_{\rm E},\hat{r}_{\rm E})} \Delta {r}_{\rm E}. 
			\end{align}
		\end{subequations}
		Let $\varepsilon_{\theta,{\rm Tayl}}  \triangleq \| \triangledown_{\theta}\mathbf{a}|_{(\hat{\theta}_{\rm E},\hat{r}_{\rm E})}\Delta \theta_{\rm E} \|_{2}  $, and $
		\varepsilon_{r,{\rm Tayl}}  \triangleq  \|\triangledown_{r}\mathbf{a}|_{(\hat{\theta}_{\rm E},\hat{r}_{\rm E})} \Delta {r}_{\rm E} \|_{2} $ denote the approximated CSV errors in the angle and range domains. When $ \Delta r_{\rm E} / \hat{r}_{\rm E}$ is small and $ \sin\theta_{\rm E} - \sin\hat{\theta}_{\rm E} \le \frac{1}{2N}$, we obtain that $\varepsilon_{\theta,{\rm Tayl}}  = \varepsilon_{\theta,{\rm appr}}$ and $\varepsilon_{r,{\rm Tayl}}  = \varepsilon_{r,{\rm appr}}$ (see Appendix~\ref{App:TaylorBound}). As such, the proof of~\eqref{Exp:partialTaylor} can be established via   Appendix~\ref{App:partialTaylor}. Based on the above,  we obtain $\|\Delta\mathbf{a}_{\rm E} \|_{2}^{2} \approx 0$, thus completing the proof.
	\end{proof}

	{\bf{Proposition 2}} establishes the conditions for which the first-order Taylor approximation of the near-field CSV under location uncertainty is largely accurate.
	Note that under the considered location error model,
	it can be easily shown that when the angle error satisfies $\sin(\hat{\theta}_{\rm E} +\Delta \theta_{\rm E})-\sin\hat{\theta}_{\rm E} \le \frac{1}{2N}$, the corresponding range error $ \Delta r_{\rm E} \ll \hat{r}_{\rm E}$ always holds, since ${\Delta \theta_{\rm E}} =\arcsin(\Delta r_{\rm E}/\hat{r}_{\rm E}) \approx \Delta r_{\rm E}/\hat{r}_{\rm E}$ is significantly small.
	The above findings indicate that to ensure the accuracy of the first-order Taylor approximation, one can partition the original uncertainty region $\mathcal{A}_{\rm E}$ into multiple sub-regions under the constraint that the  angle error satisfies 
	\begin{align}
		\sin(\hat{\theta}_{\rm E} +\Delta \theta_{\rm E})-\sin\hat{\theta}_{\rm E} \le \frac{1}{2N} \triangleq \mathcal{S}_{\theta},
	\end{align}
	which will be exploited in Section~\ref{Sec:III-B2} for the uncertain region partitioning.

	\begin{example}\rm 
		\begin{figure}[t]
			\centering
			\includegraphics[width=0.3\textwidth]{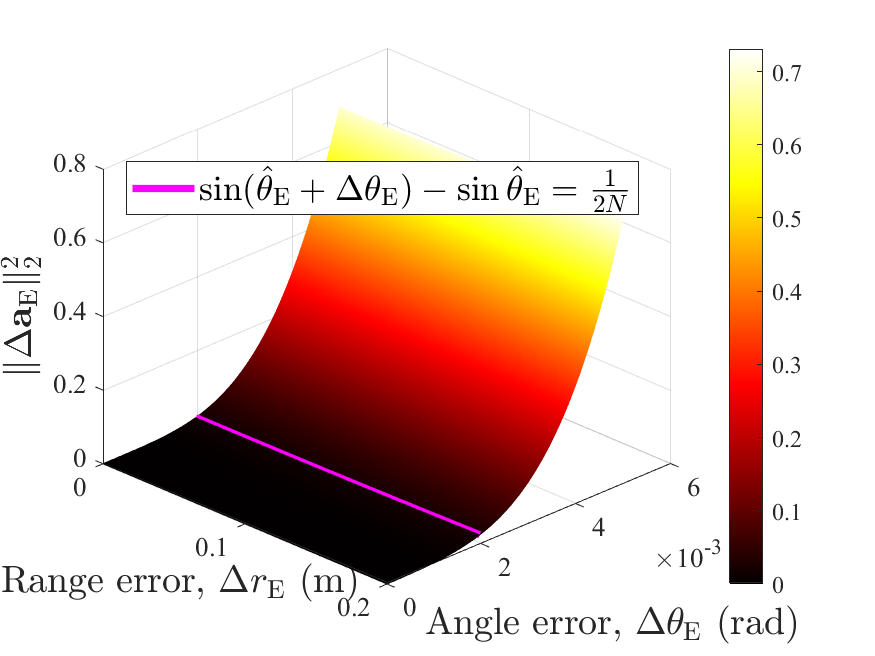}
			\caption{ $\|\Delta\mathbf{a}_{\rm E} \|_{2}^{2} $ versus angle error and range error.} \label{Fig:Sec3_accuracy}
			\vspace{-14pt}
		\end{figure}
		To evaluate the accuracy of the first-order Taylor approximation, we plot in Fig.~\ref{Fig:Sec3_accuracy} the squared Euclidean norm between the actual CSV and its Taylor-approximated vector, i.e.,$ \|\Delta\mathbf{a}_{\rm E} \|_{2}^{2}$. Key observations are made as follows.
		\begin{itemize}
			\item When  $\sin(\hat{\theta}_{\rm E} + \Delta \theta_{\rm E} ) -\sin\hat{\theta}_{\rm E} \le \frac{1}{2N}$,  $\|\Delta\mathbf{a}_{\rm E} \|_{2}^{2} $ remains relatively small, indicating that the Taylor approximation for the CSV is accurate in this regime.
			\item  Once $\sin(\hat{\theta}_{\rm E} + \Delta \theta_{\rm E}) - \sin \hat{\theta}_{\rm E} > \frac{1}{2N}$, the approximation error grows substantially, suggesting that the Taylor approximation becomes inaccurate and thus unsuitable for characterizing regions with a large angle error. 
		\end{itemize}
	\end{example}
		
	\subsubsection{Proposed Algorithm to Solve Problem (P2)}\label{Sec:III-B2}
	Based on the condition of accurate Taylor approximation, we propose an efficient two-stage robust beamforming method, which judiciously partitions the spherical uncertainty region in the first stage, followed by the second stage to reformulate and solve the resulting optimization problem by using the first-order Taylor approximation and the GSD techniques.

	\textbf{Stage 1 (Uncertainty region partitioning):}
	To ensure the accuracy of Taylor approximation in characterizing the CSV in the uncertainty location of Eve, we first divide the origin spherical uncertainty region $\mathcal{A}_{\rm E}$ into multiple \emph{fan-shaped} sub-regions, as shown in Fig.~\ref{Fig:Sampling}, which ensures that the maximum angle error within each sub-region is constrained by $	\sin(\hat{\theta}_{\rm E} +\Delta \theta)-\sin\hat{\theta}_{\rm E} \le \frac{1}{2N}$.
	Then, a surrogate location for each uncertainty sub-region is selected to facilitate robust beamforming design.
	
	\underline{{Angle region}}: First, given the estimated Eve location $(\hat{\theta}_{\rm E},\hat{r}_{\rm E})$ and its uncertainty region $\mathcal{A}_{\rm E}$, the largest and smallest angle for enabling accurate Taylor approximation can be determined as $\theta_{{\rm E}}^{(\rm ub)}= \hat{\theta}_{\rm E} + \arcsin(\Upsilon/\hat{r}_{\rm E})$ and $\theta_{{\rm E}}^{(\rm lb)} =\hat{\theta}_{\rm E} - \arcsin(\Upsilon/\hat{r}_{\rm E}$), respectively. Then, the angle uncertainty region $\big[\theta_{{\rm E}}^{(\rm lb)},\theta_{{\rm E}}^{(\rm ub)}\big]$ is partitioned into  $2S+1$ angle sub-regions, with their (central) sampling angles denoted as
	\begin{align}
		\tilde{\boldsymbol{\theta}}_{\rm E} \triangleq &\big[\varphi_{-S},\ldots,\varphi_{-1}, 
		\varphi_{0},
		\varphi_{1},\ldots,\varphi_{S} \big]^T,
	\end{align}
	Specifically, each angle sub-region is symmetric about its sampling angle $\varphi_{s}, \forall s\in \mathcal{S}$ with 
	$\varphi_{s}^{ \max}$ and $\varphi_{s}^{\min}$ being its maximum and minimum angles. To ensure the maximum angle error within each sub-region  constrained by $	\sin(\hat{\theta}_{\rm E} +\Delta \theta)-\sin\hat{\theta}_{\rm E} \le \frac{1}{2N} $, the angle sampling point and its corresponding maximum and minimum angles are set as
	\begin{subequations}
		\begin{align}
			\varphi_{s} &= \arcsin \Big( \sin\hat{\theta}_{{\rm E}}+\frac{s}{N} \Big), s \neq \pm \mathcal{S},\\
			\varphi_{s}^{ \max} &=  \arcsin \Big( \sin\hat{\theta}_{{\rm E}}+\frac{s}{N}+\frac{1}{2N} \Big), s\neq \pm \mathcal{S}, \\
			\varphi_{s}^{ \min} &=  \arcsin \Big( \sin\hat{\theta}_{{\rm E}}+\frac{s}{N}-\frac{1}{2N} \Big), s\neq \pm \mathcal{S}.
		\end{align}
	\end{subequations}	
	Additionally, $\varphi_{S} = (\theta_{{\rm E}}^{(\rm ub)} + \varphi_{S-1}^{ \max})/2$ with $(\varphi_{S}^{ \min},\varphi_{S}^{ \max}) = (\varphi_{S-1}^{ \max},\theta_{{\rm E}}^{(\rm ub)})$, $\varphi_{-S} = (\theta_{{\rm E}}^{(\rm lb)} + \varphi_{-S+1}^{ \min})/2$ with $(\varphi_{-S}^{ \min},\varphi_{-S}^{ \max}) = (\theta_{{\rm E}}^{(\rm lb)}, \varphi_{-S+1}^{ \min})$, and $S = \lfloor (\sin\hat{\theta}_{{\rm E}}^{(\rm ub)} -  \sin \hat{\theta}_{\rm E} )/2\mathcal{S}_{\theta} + 1/2 \rfloor$.

	\underline{Range region}: Next, given each angle uncertainty region $\mathcal{A}^{(\theta)}_{s} \triangleq [\varphi_{s}^{ \min},\varphi_{s}^{ \max}], s \in \mathcal{S}$, the corresponding range uncertainty region can be obtained as $\mathcal{A}^{(r)}_{s} \triangleq [r_{s}^{ \min},r_{s}^{ \max}], s \in \mathcal{S}$. 
	Herein,  $r_{s}^{\max}$ and $r_{s}^{\min}$ denote the maximum and minimum ranges of Eve within the overall uncertainty region $\mathcal{A}_{\rm E}$ and angular sub-region $[\varphi_{s}^{ \min},\varphi_{s}^{ \max}]$, which are given by
	\begin{subequations}
		\begin{align}
			r_{s}^{\rm \min} = \min\{ \bar{d}_{{\rm E},1}(\varphi_{s}^{\min}),\bar{d}_{{\rm E},1}(\varphi_{s}^{\max
			}) \},\\
			r_{s}^{\rm \max} = \max\{ \bar{d}_{{\rm E},2}(\varphi_{s}^{\min}),\bar{d}_{{\rm E},2}(\varphi_{s}^{\max
			}) \},
		\end{align}
	\end{subequations}
respectively. Herein, $\bar{d}_{{\rm E},1}(\theta)=\hat{d}_{\rm E}(\theta) -\sqrt{\Upsilon^2-\big(\check{d}_{\rm E}(\theta)\big)^{2}}$ and $\bar{d}_{{\rm E},2}(\theta)=\hat{d}_{\rm E}(\theta) +\sqrt{\Upsilon^2-\big(\check{d}_{\rm E}(\theta)\big)^{2}}$ with $\hat{d}_{\rm E}(\theta) = \hat{x}_{\rm E}\cos\theta + \hat{y}_{\rm E}\sin\theta$ and $\check{d}_{\rm E}(\theta) = \hat{x}_{\rm E}\sin\theta - \hat{y}_{\rm E}\cos\theta$, respectively. We thus obtain the fan-shaped uncertainty sub-regions  $\mathcal{A}_{s}, s \in \mathcal{S}$, as shown in Fig.~\ref{Fig:Sampling}, which is mathematically expressed as
	\begin{align}
		\mathcal{A}_{s} =\big\{ (\theta,r)| \theta \in [\varphi_{s}^{\min},\varphi_{s}^{\max}],~r\in[r_{s}^{\rm \min},r_{s}^{\rm \max}] \big\}.
	\end{align}
	
	\begin{figure}[t]
		\centering
		\includegraphics[width=0.3\textwidth]{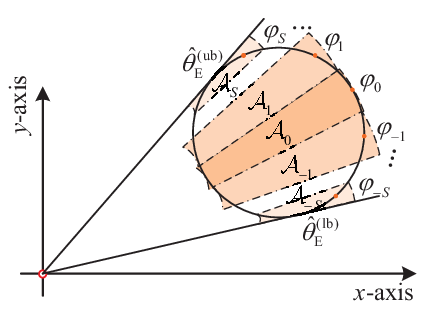}
		\caption{Schematic of the sub-region partitioning.} \label{Fig:Sampling}
		\vspace{-12pt}
	\end{figure}
	\underline{Surrogate location selection}: 
	Last, for each sub-region $\mathcal{A}_s$, the sampling angle $\varphi_s$ is retained as the surrogate angle. Additionally, the midpoint of the range uncertainty region is set as the surrogate range, i.e., $ r_{s}  = ({r_{s}^{\min}  + r_{s}^{\max}})/{2},  s \in \mathcal{S}$. Based on the above, the surrogate location of each sub-region $\mathcal{A}_{s}$ is set as $(\varphi_{s}, r_{s}),  s \in \mathcal{S}$. As such, for each sub-region, given the surrogate location $(\varphi_{s}, r_{s})$, the maximum absolute angle and range errors within $ \mathcal{A}_{s}$ satisfy 
	\begin{subequations}
		\begin{align}
			&\big|{\Delta r_{s}}\big| \le \left| ({r_{s}^{\max}  - r_{s}^{\min}})/{2}  \right| \triangleq \epsilon_{s},~s\in\mathcal{S}, \\
			&|\Delta \varphi_{s}| \le  \left|(\varphi_{s}^{\max} -  \varphi_{s}^{\min})/2 \right| \triangleq \vartheta_{s},~s\in\mathcal{S}.  
		\end{align}
	\end{subequations} 
	
	\textbf{Stage 2 (Refined LMI-based robust beamforming design):}
	Based on the uncertainty region partitioning, the optimization problem \textbf{(P2)} is reformulated as
	\begin{align}
		(\textbf{P3}):\; \max_{\mathbf{w}} \quad & g\big(\mathbf{w}|\mathbf{w}^{(j)}\big) \nonumber \\
		\text{s.t.}\quad     &   \eqref{C:Power_S}, \nonumber \\
		& \max_{ (\theta_{\rm E}, r_{\rm E}) \in \mathcal{A}_{s} }  |\mathbf{a}^H(\theta_{\rm E}, r_{\rm E}) \mathbf{w} |^2 \!\le\! \Gamma,~ \forall  s\in \mathcal{S}.  \label{C:Secrecy_S_Trans}
	\end{align}
	Note that although constraint~\eqref{C:Secrecy_S} is reformulated as $2S+1$ continuous uncertainty sets, Problem \textbf{(P3)} remains intractable due to its infinite number of constraints in~\eqref{C:Secrecy_S_Trans}. Based on the analysis in Section~\ref{Sec:III-B1}, after sub-region partitioning,  the CSV of Eve $(\theta_{\rm E},r_{\rm E})\in \mathcal{A}_{s}$ can be approximated by using the first-order Taylor expansion around $\mathbf{a}(\varphi_{s}, r_{s}), s\in \mathcal{S}$. Moreover, to address the issue of conservative beamforming design in the error-bound-based method, for which near-field angle-error amplification results in a significant error bound, we propose a refined LMI method as follows to achieve a much higher secure transmission rate.
	\begin{proposition}[Refined LMI reformulation]\label{Pro:LMI}
		\rm
		Based on the Taylor approximation of CSV in~\eqref{Exp:NFV_Taylor},  constraint~\eqref{C:Secrecy_S_Trans} can be rewritten as
		\begin{align}\label{Exp:SU_set}
			\max_{\boldsymbol{\zeta}_{s}^T \boldsymbol{\Sigma}_{s}^{-1} \boldsymbol{\zeta}_{s} \le 1,~\forall s\in \mathcal{S}}  \quad  \big|\mathbf{w}^H \big(\mathbf{a}(\varphi_{s},r_{s}) + \mathbf{J}_{s} \boldsymbol{\zeta}_{s}      \big) \big|^2 \le \Gamma,  
		\end{align}
		where $\mathbf{J}_{s} = [\triangledown_{r}\mathbf{a},\triangledown_{\theta}\mathbf{a}]\big|_{( \varphi_{s}, r_{s} )}$ is the gradient matrix,
		$ \boldsymbol{\zeta}_{s} = [\Delta r_{s},  \Delta \varphi_{s}]^T $ is the error vector,  
		 and $\boldsymbol{\Sigma}_{s}=\diag(\epsilon_{s}^2,\vartheta_{s}^2) $. 
	Then, by denoting $\mathbf{a}_{s} \triangleq \mathbf{a}(\varphi_{s},r_{s}) $ for simplicity, the non-convex 
	constraint~\eqref{Exp:SU_set} can be re-expressed as the following refined $2S+1$ LMIs by using GSD
	\begin{align}\label{Exp:LMI}
		\mathbf{H}_{s} \!\triangleq \!\left[
		\begin{array}{@{}c@{~}c@{~}c@{}}
			\left[
			\begin{array}{c@{~~}c@{}}
				{\Gamma} & \mathbf{w}^{H}\mathbf{a}_{s} \\
				\mathbf{a}_{s}^{H}\mathbf{w} & 1-{\lambda}_{s}^{(r)}-{\lambda}_{s}^{(\theta)} 
			\end{array}
			\right] 
			& \epsilon_{s} {\mathbf{b}}_{s}^{(r)} & \vartheta_{s} {\mathbf{b}}_{s}^{(\theta)} \\[1pt]
			\epsilon_{s} ({\mathbf{b}}_{s}^{(r)})^H & {\lambda}_{s}^{(r)}  & {0} \\[1pt]
			\vartheta_{s} ({\mathbf{b}}_{s}^{(\theta)})^H & {0}  & {\lambda}_{s}^{(\theta)} 
		\end{array}
		\right] \! \!\succeq\! \mathbf{0},
	\end{align}
	where ${\lambda}_{s}^{(r)}\ge0$ and $ {\lambda}_{s}^{(\theta)}  \ge 0$ are auxiliary variables,  ${\mathbf{b}}_{s}^{(r)} = [\triangledown_{r}\mathbf{a}^H\big|_{( \varphi_{s}, r_{s} )}\mathbf{w}, 0 ]^H$, and  ${\mathbf{b}}_{s}^{(\theta)} = [\triangledown_{\theta}\mathbf{a}^H\big|_{( \varphi_{s}, r_{s} )}\mathbf{w}, 0 ]^H$.
	\end{proposition}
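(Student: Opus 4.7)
The plan is to chain together three standard ingredients: the Taylor expansion already validated in Proposition~2, the Schur complement, and the Generalized Sign-Definiteness (GSD) lemma in the form used in~\cite{GuiZhou_Robust}.

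First, since Stage~1 partitions $\mathcal{A}_{\rm E}$ so that every $(\theta_{\rm E},r_{\rm E})\in\mathcal{A}_{s}$ satisfies the accuracy hypotheses of Proposition~2 (namely $\sin(\hat{\theta}_{\rm E}+\Delta\theta)-\sin\hat{\theta}_{\rm E}\le 1/(2N)$ and $\Delta r_{\rm E}\ll \hat{r}_{\rm E}$), I would substitute the first-order expansion $\mathbf{a}(\theta_{\rm E},r_{\rm E})\approx \mathbf{a}_{s}+\triangledown_{r}\mathbf{a}|_{(\varphi_{s},r_{s})}\Delta r_{s}+\triangledown_{\theta}\mathbf{a}|_{(\varphi_{s},r_{s})}\Delta \varphi_{s}$ into the semi-infinite constraint~(\ref{C:Secrecy_S_Trans}) and collect the two scalar deviations into $\boldsymbol{\zeta}_{s}=[\Delta r_{s},\Delta \varphi_{s}]^{T}$ with Jacobian $\mathbf{J}_{s}=[\triangledown_{r}\mathbf{a},\triangledown_{\theta}\mathbf{a}]|_{(\varphi_{s},r_{s})}$. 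Packing the Stage-1 box bounds $|\Delta r_{s}|\le \epsilon_{s}$ and $|\Delta \varphi_{s}|\le \vartheta_{s}$ into the diagonal shape matrix $\boldsymbol{\Sigma}_{s}=\diag(\epsilon_{s}^{2},\vartheta_{s}^{2})$ yields the ellipsoidally parameterized semi-infinite constraint~(\ref{Exp:SU_set}).

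Second, I would reduce the scalar inequality $|\mathbf{w}^{H}(\mathbf{a}_{s}+\mathbf{J}_{s}\boldsymbol{\zeta}_{s})|^{2}\le \Gamma$ to a $2\times 2$ positive-semidefinite constraint via the Schur complement. A direct expansion shows the resulting $2\times 2$ matrix is affine in $\boldsymbol{\zeta}_{s}$, decomposing as a constant block plus two rank-two perturbations, one per coordinate of $\boldsymbol{\zeta}_{s}$: namely $\Delta r_{s}\,(\mathbf{e}_{1}(\mathbf{b}_{s}^{(r)})^{H}+\mathbf{b}_{s}^{(r)}\mathbf{e}_{1}^{H})$ and $\Delta \varphi_{s}\,(\mathbf{e}_{1}(\mathbf{b}_{s}^{(\theta)})^{H}+\mathbf{b}_{s}^{(\theta)}\mathbf{e}_{1}^{H})$ with $\mathbf{e}_{1}=[1,0]^{T}$, where $\mathbf{b}_{s}^{(r)}$ and $\mathbf{b}_{s}^{(\theta)}$ are exactly the vectors defined in the statement. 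This exposes the structural form required by the multi-uncertainty GSD lemma. Third, I would invoke that lemma (as used in~\cite{GuiZhou_Robust}), which introduces one nonnegative multiplier $\lambda_{s}^{(r)},\lambda_{s}^{(\theta)}$ per scalar uncertainty direction and produces an enlarged block LMI whose diagonal blocks carry the multipliers and whose off-diagonal blocks carry the scaled perturbations $\epsilon_{s}\mathbf{b}_{s}^{(r)}$ and $\vartheta_{s}\mathbf{b}_{s}^{(\theta)}$. Substituting the pieces from Step~2 reproduces exactly~(\ref{Exp:LMI}).

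The main obstacle I expect is reconciling the two nominally different uncertainty descriptions that appear along the way: the axis-aligned box $\{|\Delta r_{s}|\le \epsilon_{s},\,|\Delta \varphi_{s}|\le \vartheta_{s}\}$ that is natural from the Stage-1 geometry, and the inscribed ellipsoid $\boldsymbol{\zeta}_{s}^{T}\boldsymbol{\Sigma}_{s}^{-1}\boldsymbol{\zeta}_{s}\le 1$ written in~(\ref{Exp:SU_set}). Because the final LMI carries two separate multipliers rather than one, the GSD invocation is implicitly treating the two scalar uncertainties as independent norm-bounded perturbations; this needs to be documented carefully, and one also has to verify that the residual Taylor approximation error $\|\Delta\mathbf{a}_{\rm E}\|_{2}^{2}$ neglected in Step~1 is small enough (as supported numerically by Fig.~\ref{Fig:Sec3_accuracy}) to be absorbed without compromising the worst-case secrecy guarantee. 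The remaining algebra is a routine Schur-complement and sign-definiteness manipulation.
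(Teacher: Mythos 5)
Your route is essentially the paper's own proof (Appendix~\ref{App:LMI}): write the secrecy constraint with the Taylor-approximated CSV, pass to a $2\times 2$ positive-semidefinite condition via the Schur complement, split the $\boldsymbol{\zeta}_{s}$-dependent part into one perturbation per scalar error, and invoke the GSD lemma with one multiplier per direction, which yields exactly~\eqref{Exp:LMI}. Two small remarks. First, your rank-two decomposition has an indexing slip: since the perturbation enters the off-diagonal entries of the $2\times 2$ matrix (the $(1,2)$ entry is $\mathbf{w}^{H}\mathbf{J}_{s}\boldsymbol{\zeta}_{s}$) and $\mathbf{b}_{s}^{(r)},\mathbf{b}_{s}^{(\theta)}$ have their nonzero component in the first coordinate, the correct form is $\Delta r_{s}\big(\mathbf{b}_{s}^{(r)}\mathbf{e}_{2}^{H}+\mathbf{e}_{2}(\mathbf{b}_{s}^{(r)})^{H}\big)$ with $\mathbf{e}_{2}=[0,1]^{T}$ (and likewise for $\Delta\varphi_{s}$), not $\mathbf{e}_{1}=[1,0]^{T}$, which would incorrectly place the perturbation on the $(1,1)$ entry; with $\mathbf{e}_{1}$ the decomposition does not reproduce the perturbation and GSD cannot be applied in the stated form. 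Second, on the box-versus-ellipsoid concern you raise: the paper does exactly what you suspect, i.e., it applies GSD to the two independent bounds $|\Delta r_{s}|\le\epsilon_{s}$, $|\Delta\varphi_{s}|\le\vartheta_{s}$; since this box contains the ellipsoid $\boldsymbol{\zeta}_{s}^{T}\boldsymbol{\Sigma}_{s}^{-1}\boldsymbol{\zeta}_{s}\le 1$ (and the Stage-1 error bounds themselves are box bounds), the resulting LMI is a sufficient, mildly conservative certificate for~\eqref{Exp:SU_set}, so no additional argument is needed beyond stating this containment.
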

	\vspace{-5pt}
	\begin{proof}
	Please refer to Appendix~\ref{App:LMI}.
	\end{proof}
	
	Based on {\bf{Proposition 3}}, the non-convex constraint~\eqref{C:Secrecy_S_Trans} involving infinite uncertainty sets is transformed into a finite set of deterministic LMIs. As such, Problem \textbf{(P3)} can be recast as the following convex problem
	\begin{align}
		(\textbf{P4}):\; \max_{\mathbf{w},{\boldsymbol{\lambda}}^{(r)},\boldsymbol{\lambda}^{(\theta)}}  \quad & g\big(\mathbf{w}|\mathbf{w}^{(j)}\big) \nonumber \\
		\text{s.t.} \quad   &  \eqref{C:Power_S}, \nonumber \\
		&\mathbf{H}_{s} \succeq \mathbf{0},~\forall  s\in \mathcal{S},
	\end{align}
	where ${\boldsymbol{\lambda}}^{(r)} = [{\lambda}_{1}^{(r)},\ldots,{\lambda}_{S}^{(r)}]^T$ and ${\boldsymbol{\lambda}}^{(\theta)}  = [{\lambda}_{1}^{(\theta)} ,\ldots,{\lambda}_{S}^{(\theta)} ]^T$. \textbf{(P4)} is a convex optimization problem, which can be efficiently solved by using convex optimization tools, e.g., CVX.

	\begin{figure*}[t]
		\centering
		\begin{subfigure}[b]{0.245\linewidth}
			\includegraphics[width=1.05\linewidth]{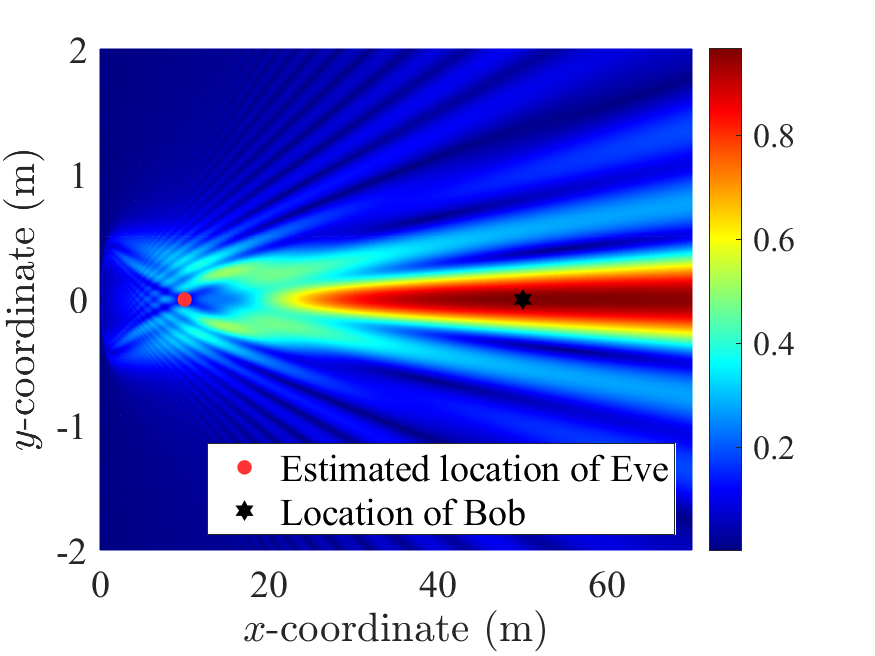}
			\caption{Non-robust method.}
			\label{Fig:SUMapping1}
		\end{subfigure}
		\hspace{-5pt}
		\begin{subfigure}[b]{0.245\linewidth}
			\includegraphics[width=1.05\linewidth]{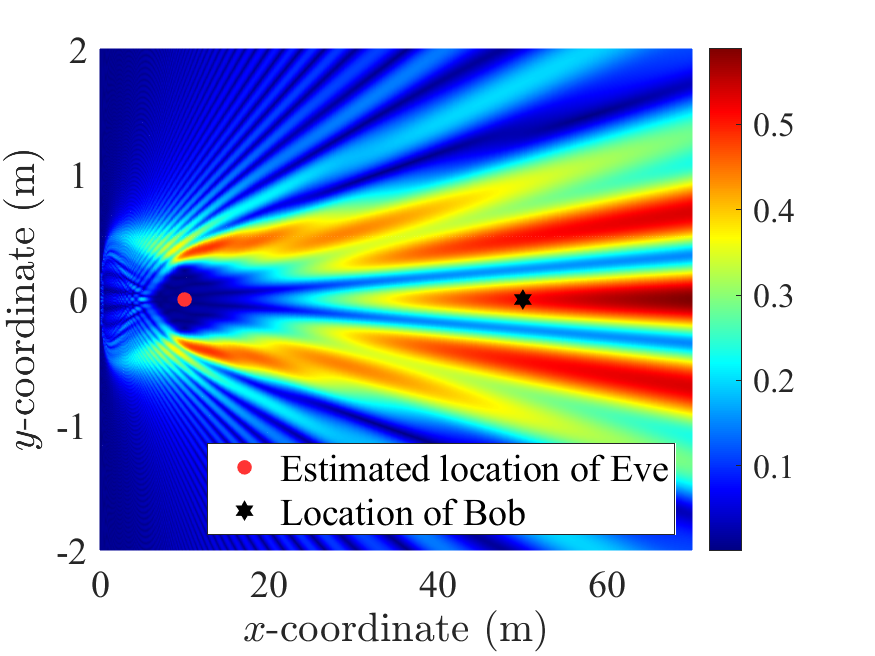}
			\caption{Sampling-based method.}
			\label{Fig:SUMapping2}
		\end{subfigure}
		\hspace{-5pt}
		\begin{subfigure}[b]{0.245\linewidth}
			\includegraphics[width=1.05\linewidth]{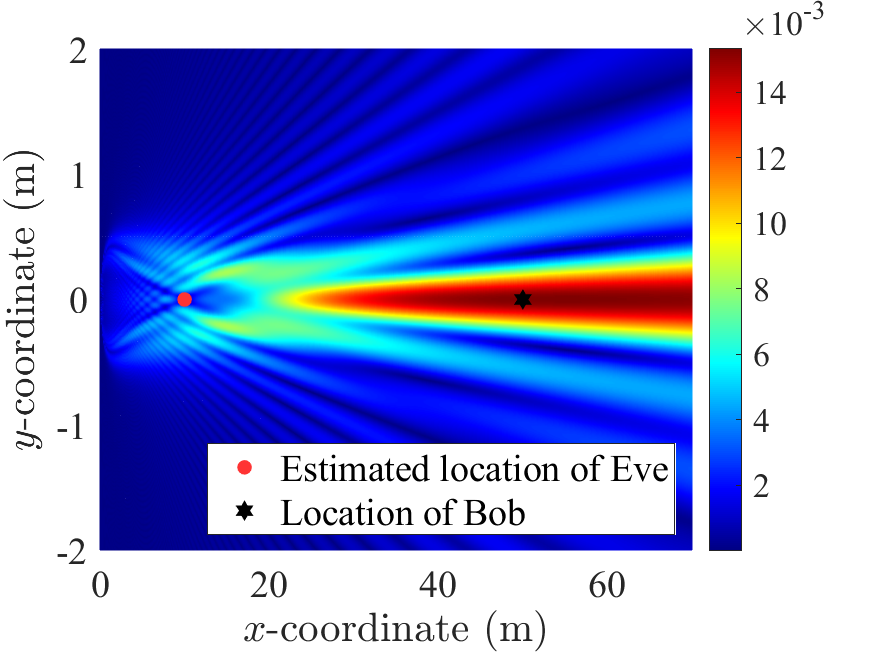}
			\caption{Error-bound-based method.}
			\label{Fig:SUMapping3}
		\end{subfigure}
		\hspace{-5pt}
		\begin{subfigure}[b]{0.245\linewidth}
			\includegraphics[width=1.05\linewidth]{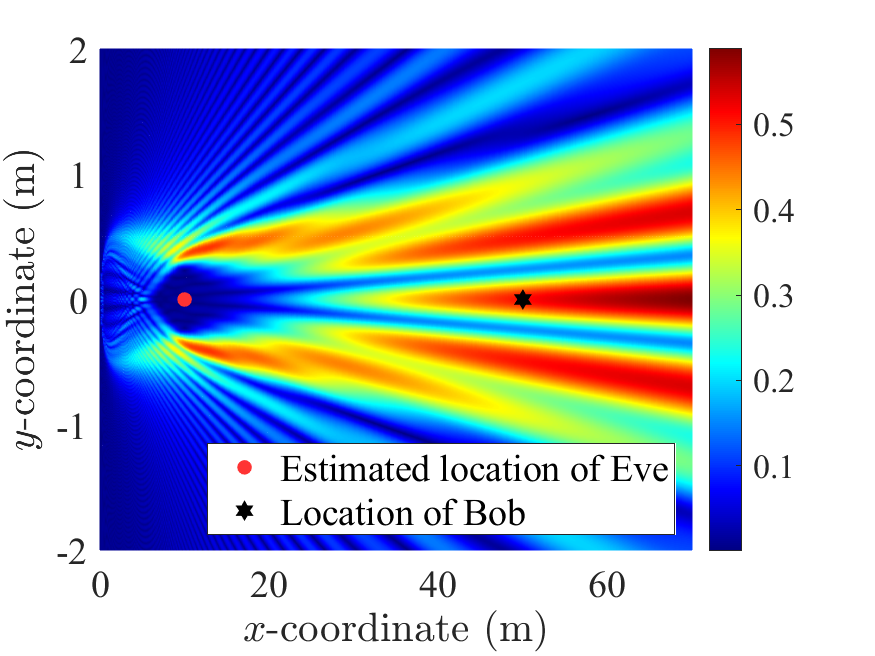}
			\caption{Proposed method.}
			\label{Fig:SUMapping4}
		\end{subfigure}
		\caption{Beam patterns of different methods.}
		\label{Fig:SUMapping}
		\vspace{-12pt}
	\end{figure*}
	
		\begin{remark}[Advantages of  proposed method]\rm 
		Compared with the conventional methods in Section~\ref{Sec:III-A}, the proposed near-field robust beamforming design enjoys two key advantages. First, unlike existing methods~\cite{li2023robust} where the location uncertainty is characterized by its CSV error bound  (e.g., $ \varepsilon_{{\rm Tayl}}^{(\rm ub)} $ or $ \varepsilon^{(\rm ub)} $), which is significantly large when the range of Eve is relatively small,
		the proposed method incorporates both the surrogate CSV $\mathbf{a}(\varphi_{s}, r_{s})$ and its associated gradient space $\mathbf{J}_{s} \boldsymbol{\zeta}_{s}$ in~\eqref{Exp:SU_set}, which together characterize the location uncertainty region of the Eve. As such, the proposed method is capable of designing efficient beamforming vector to reduce the information leakage within the uncertain region of Eve, rather than reducing the BS transmit power,  hence achieving an increased rate.
		Second, according to the GSD lemma~\cite{GuiZhou_Robust}, the dimension of the resulting LMI is determined by the dimension of the underlying uncertainty variables. In the error-bound-based method, the  uncertainty vector is of dimension $N$, leading to an LMI involving a matrix $\mathbf{G}_{\rm E}$ of size $N+2$ in~\eqref{Exp:Con_LMI}. In contrast, the proposed refined LMI method characterizes the location uncertainty (i.e., $\mathcal{A}_{\rm E}$) using only two variables (range and angle). As a result, the dimension of the matrix $\mathbf{H}_{\rm s}$ in~\eqref{Exp:LMI} reduces from $(N+2)$ to $4$ (see {\bf Proposition~\ref{Pro:LMI}} for details), thereby significantly reducing the computational complexity. 
		
	\end{remark} 
 
	\begin{figure}[t]
		\centering
		\begin{subfigure}{0.49\linewidth} 
			\centering
			\includegraphics[width=\linewidth]{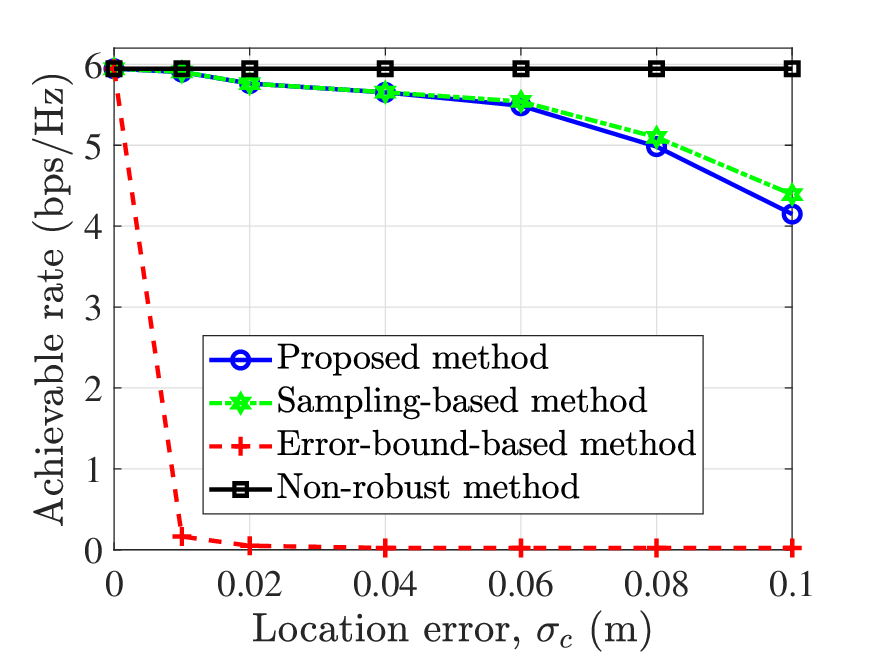}
			\caption{Achievable rate versus location error.}
			\label{Fig:Sec3-Rate}
		\end{subfigure}
		\hspace{-5pt} 
		\begin{subfigure}{0.49\linewidth} 
			\centering
			\includegraphics[width=\linewidth]{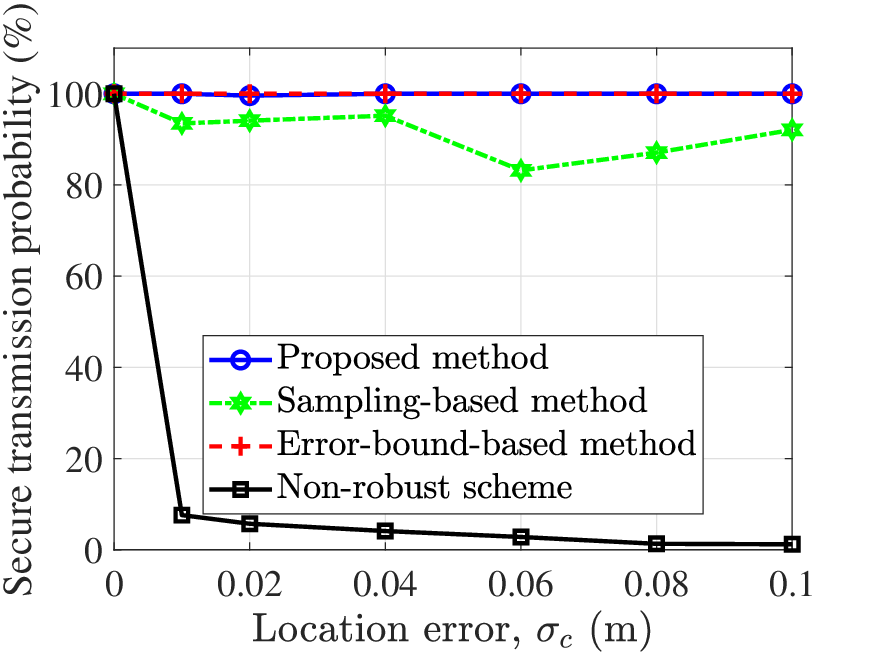}
			\caption{Secure transmission probability versus location error.}
			\label{Fig:Sec3-Feas}
		\end{subfigure}
		\caption{Achievable rate and secure transmission probability versus location error.}
		\label{Fig:Sec3_RateandFeas}
		\vspace{-15pt} 
	\end{figure}
	
	\begin{example}\rm 
		In Figs.~\ref{Fig:SUMapping} and~\ref{Fig:Sec3_RateandFeas}, we present the beam patterns, achievable rate, and secure transmission probability of different methods where $\sigma_{{\rm c}}= 0.1$, $\hat{\theta}_{\rm E} = 0$, $\hat{r}_{\rm E} = 10$ m, ${\theta}_{\rm B} = 0$, and  ${r}_{\rm B} = 50$ m. Key observations are summarized below:
		\begin{itemize}
			\item 
		\emph{Non-robust method}: 
       This method achieves the highest rate shown in Fig.~\ref{Fig:Sec3_RateandFeas}(a), but it suffers from a low secure transmission probability (see Fig.~\ref{Fig:Sec3_RateandFeas}(b)) because of severe power leakage around the estimated location, as illustrated in Fig.~\ref{Fig:SUMapping}(a).
	
		\item 
		\emph{Sampling-based method}: 
		This method attains the second-highest achievable rate with a secure transmission probability above 80\% (see Fig.~\ref{Fig:Sec3_RateandFeas}). As shown in Fig.~\ref{Fig:SUMapping}(b), it markedly reduces the power leakage around the estimated location at the huge computational cost.
        	
		\item 
		\emph{Error-bound-based method}: This method
        ensures a secure transmission probability of 1 by conservatively constraining the BS transmit power. This yields a beam pattern similar to that of the non-robust method, while the beamforming gain of  former one is a much smaller than the latter one, as shown in Fig.~\ref{Fig:SUMapping}(c). Consequently, it achieves the lowest achievable rate (see Fig.~\ref{Fig:Sec3_RateandFeas}(a)).

		\item 
		\emph{Proposed method}: 
        The proposed method attains close achievable rate with the sampling-based method, while ensuring a secure transmission probability of 1, as shown in Fig.~\ref{Fig:Sec3_RateandFeas}.
        By partitioning the uncertainty region and applying a refined LMI-based beamforming design, it suppresses power leakage in the potential region of Eve as shown in Fig.~\ref{Fig:SUMapping}(d).
		\end{itemize}
	\end{example}

	\section{Multiple-Bob-Multiple-Eve Scenario}
	In this section, we extend the robust beamforming design in Section~\ref{Sec:III} to the more general scenario with multiple Bobs and multiple Eves. Specifically, in the first stage, the spherical uncertainty regions of all Eves are partitioned. Then, in the second stage, a refined LMI reformulation is performed, upon which the SCA technique is applied to solve the resulting optimization problem.

	\vspace{-6pt}
	\subsection{Proposed Solution to Problem \textbf{(P1)}}
	Note that compared with the single-user case in Section~\ref{Sec:III}, the near-field robust beamforming design in the general case is more challenging, since 1) the worst-case eavesdropping rate of each Bob at each Eve must be guaranteed, which imposes stricter constraints on the system design, and 2) the existence of multi-user interference complicates the objective function, making the robust optimization problem highly challenging to solve.
	
	To address these issues,  we first reformulate Problem \textbf{(P1)} as follows
	\begin{align}
		(\textbf{P5}):\; \max_{\{\mathbf{w}_{k}\}} \quad &\sum_{k=1}^{K}~ R_{{\rm B},k}  \nonumber \\
		\text{s.t.}   \quad &\eqref{C:Power},\nonumber \\
		&\max_{\{\mathbf{q}_{{\rm E},m}\in \mathcal{A}_{m}\}}  |\mathbf{a}(\mathbf{q}_{{\rm E},m})\mathbf{w}_{k}|^2 \!\le\! \Gamma_{m},~\forall m, k,   \label{C:Secrecy_MU}
	\end{align}
	where $\Gamma_{m}=\frac{\sigma^2 (2^{R_{\max}}-1)}{N|h_{{\rm E},m}|^2}$.  To solve this non-convex problem, the SCA technique is employed to obtain a concave surrogate function of ${R}_{{\rm B},k}$ in the objective function, which is given by~\cite{MyTPM}
	\begin{align}
		\tilde{R}_{{\rm B},k} & \triangleq \frac{1}{\ln2} \Bigg(  
		\ln  \bigg( 1+\frac{| {\iota}_{k}^{\left(j \right) } |^2}{  {\nu}_{k}^{\left(j \right) }  } \bigg)     
		- \frac{| {\iota}_{k}^{\left(j \right) } |^2}{  {\nu}_{k}^{\left(j \right) }  } 
		\nonumber \\
		& + \frac{2\mathcal{R}\left\lbrace (  {\iota}_{k}^{\left(j \right) } )^H  {\iota}_{k}  \right\rbrace }{{\nu}_{k}^{\left(j \right) }}
		-\frac{| {\iota}_{k}^{\left(j \right) } |^2 \big(  | {\iota}_{k} |^2  + {\nu}_{k}  \big)  }
		{ {\nu}_{k}^{\left(j \right) } \big(| {\iota}_{k}^{\left(j \right) } |^2+   {\nu}_{k}^{\left(j\right) }\big) }
		\Bigg),
	\end{align}
	with ${\iota}_{k} = \mathbf{h}^H_{{\rm B},k}  \mathbf{w}_{k}$,  ${\nu}_{k} = \sum_{i=1,i\neq k}^{K} |\mathbf{h}^H_{{\rm B},k}  \mathbf{w}_{i} |^2 + \sigma^2$, $ {\iota}_{k}^{(j)} =  \mathbf{h}^H_{{\rm B},k}  \mathbf{w}_{k}^{(j)} $, and ${\nu}_{k}^{(j)} = \sum_{i=1,i\neq k}^{K} |\mathbf{h}^H_{{\rm B},k}  \mathbf{w}_{i}^{(j)} |^2 + \sigma^2 $. Herein, $\mathbf{w}_{k}^{(j)}$ denotes the beamforming vector obtained from the $j$-th SCA iteration. Based on the above, we construct a tight concave lower bound for ${R}_{{\rm B},k}$ (i.e., ${R}_{{\rm B},k} \ge \tilde{R}_{{\rm B},k}$).
	
	Next, to address the non-convex constraint~\eqref{C:Secrecy_MU} induced by continuous uncertainty sets, we extend the method devised in Section~\ref{Sec:III-B2} to this more general scenario. Specifically, for an arbitrary Eve $m$, the spherical uncertainty region $\mathcal{A}_{{\rm E},m}$ is partitioned into multiple sub-region  $\mathcal{A}_{m,s_{m}}$, with their surrogate locations being
	\begin{subequations}
		\begin{align}
			\tilde{\boldsymbol{\theta}}_{{\rm E},m} \triangleq &\big[\varphi_{m,-S_m},\ldots, 
			\varphi_{m,0},
			\ldots,\varphi_{m,S_m} \big]^T,~\forall m, \\
			\tilde{\mathbf{r}}_{{\rm E},m} \triangleq &\big[r_{m,-S_m},\ldots,
			r_{m,0},
			\ldots,r_{m,S_m} \big]^T,~\forall m.
		\end{align}
	\end{subequations}
	Herein, $s_{m}\in\mathcal{S}_{m}$, where $\mathcal{S}_{m}\triangleq \{-S_{m},\ldots,0,\ldots,S_{m}\} $ with $2S_{m}+1$ being  the number of  angle sampling  points for Eve~$m$.  
	Based on {\bf Proposition~\ref{Pro:LMI}},   constraint~\eqref{C:Secrecy_MU} can be re-expressed as follows.
	\begin{corollary}
		\rm 
		By performing a Taylor expansion of the CSV at each surrogate location, constraint~\eqref{C:Secrecy_MU} can be re-expressed~as
		\begin{align}\label{Exp:MU_set}
			\max_{\boldsymbol{\zeta}_{m,s_{m}}^T \!\!\boldsymbol{\Sigma}_{m,s_{m}}^{-1}\! \boldsymbol{\zeta}_{m,s_{m}} \!\le 1}  	&\big|\mathbf{w}_{k}^H \big(\mathbf{a}(\varphi_{m,s_{m}},r_{m,s_{m}}) \!+\! \mathbf{J}_{m,s_{m}} \boldsymbol{\zeta}_{m,s_{m}}      \big) \big|^2  \nonumber \\& \le \Gamma_{m},~\forall m\in \mathcal{M},k\in \mathcal{K},s_{m}\in \mathcal{S}_{m}, 
		\end{align}
		where $ \boldsymbol{\zeta}_{m,s_{m}} \!=\! [\Delta r_{m,s_{m}}, \Delta \varphi_{m,s_{m}}]^T $,   
		$\mathbf{J}_{m,s_{m}} \!=\! [\triangledown_{r} \mathbf{a},\triangledown_{\theta}\mathbf{a}]\big|_{( \varphi_{m,s_{m}}, r_{m,s_{m}} )}$,  
		and $\boldsymbol{\Sigma}_{m,s_{m}}\!=\!\diag(\epsilon_{m,s_{m}}^2,\vartheta_{m,s_{m}}^2) $, the values of which can be determined based on the method proposed in Section~\ref{Sec:III-B2}. 
	\end{corollary}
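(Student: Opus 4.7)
The plan is to extend the refined LMI reformulation in \textbf{Proposition~\ref{Pro:LMI}} to the multi-user setting by applying it independently for each triple $(k,m,s_m)$ indexing a Bob, an Eve, and a sub-region of that Eve's uncertainty set. First, I would invoke the multi-Eve extension of Stage~1 in Section~\ref{Sec:III-B2}: for every Eve $m$, partition the spherical uncertainty region $\mathcal{A}_{{\rm E},m}$ into $2S_m+1$ fan-shaped sub-regions $\mathcal{A}_{m,s_m}$ with surrogate locations $(\varphi_{m,s_m},r_{m,s_m})$ selected so that the local range and angle errors are bounded as $|\Delta r_{m,s_m}|\le\epsilon_{m,s_m}$ and $|\Delta\varphi_{m,s_m}|\le\vartheta_{m,s_m}$, respectively, while satisfying $\sin(\hat{\theta}_{{\rm E},m}+\Delta\varphi)-\sin\hat{\theta}_{{\rm E},m}\le\tfrac{1}{2N}$. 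This guarantees that the conditions of \textbf{Proposition~\ref{Pro:JointTaylor}} hold within each sub-region, so the first-order Taylor approximation of the CSV around the surrogate location is accurate.

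Next, within each sub-region $\mathcal{A}_{m,s_m}$, I would substitute the Taylor expansion
$$\mathbf{a}(\theta_{{\rm E},m},r_{{\rm E},m})\approx \mathbf{a}(\varphi_{m,s_m},r_{m,s_m})+\mathbf{J}_{m,s_m}\boldsymbol{\zeta}_{m,s_m},$$
with $\mathbf{J}_{m,s_m}=[\triangledown_{r}\mathbf{a},\triangledown_{\theta}\mathbf{a}]\big|_{(\varphi_{m,s_m},r_{m,s_m})}$ and $\boldsymbol{\zeta}_{m,s_m}=[\Delta r_{m,s_m},\Delta\varphi_{m,s_m}]^T$, into the worst-case leakage constraint in \eqref{C:Secrecy_MU}. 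Enclosing the rectangular error box by the ellipsoid $\boldsymbol{\zeta}_{m,s_m}^T\boldsymbol{\Sigma}_{m,s_m}^{-1}\boldsymbol{\zeta}_{m,s_m}\le 1$ with $\boldsymbol{\Sigma}_{m,s_m}=\diag(\epsilon_{m,s_m}^2,\vartheta_{m,s_m}^2)$ (as in the single-user derivation leading to \eqref{Exp:SU_set}) converts the local worst-case constraint into exactly the form claimed in~\eqref{Exp:MU_set}. Because the constraint must hold for every Bob $k\in\mathcal{K}$ and every Eve $m\in\mathcal{M}$, the multi-user case introduces $K\cdot M\cdot(2S_m+1)$ such constraints, but each one is structurally identical to the single-Eve instance handled by \textbf{Proposition~\ref{Pro:LMI}}.

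Finally, since the sub-regions $\{\mathcal{A}_{m,s_m}\}_{s_m\in\mathcal{S}_m}$ cover the entire $\mathcal{A}_{{\rm E},m}$, satisfying~\eqref{Exp:MU_set} for all $s_m$ is sufficient for the original constraint~\eqref{C:Secrecy_MU}. The result for each $(k,m,s_m)$ then follows by replacing $\mathbf{w}$ with $\mathbf{w}_k$ and $\Gamma$ with $\Gamma_m$ in the scalar argument of \textbf{Proposition~\ref{Pro:LMI}}, since the per-Eve uncertainty parametrization $(\boldsymbol{\zeta}_{m,s_m},\boldsymbol{\Sigma}_{m,s_m},\mathbf{J}_{m,s_m})$ depends only on the Eve index $m$ and is independent of which beamforming vector $\mathbf{w}_k$ it multiplies.

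The main obstacle I would expect is not any new analytical machinery, but carefully verifying that the per-sub-region Taylor accuracy condition derived for one Eve in Section~\ref{Sec:III-B1} still applies uniformly when multiple Eves and multiple leakage constraints are considered simultaneously, and that the partitioning parameters $(S_m,\epsilon_{m,s_m},\vartheta_{m,s_m})$ can be chosen per Eve without mutual interference. Once this bookkeeping is established, the corollary reduces to $K M\cdot\sum_m(2S_m+1)$ independent applications of \textbf{Proposition~\ref{Pro:LMI}}, which directly yields~\eqref{Exp:MU_set}.
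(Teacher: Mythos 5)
Your proposal matches the paper's own (implicit) derivation of this corollary: each Eve's spherical region is partitioned exactly as in Stage~1 of Section~\ref{Sec:III-B2}, the first-order Taylor expansion (justified by \textbf{Proposition~\ref{Pro:JointTaylor}}) is applied around each surrogate location $(\varphi_{m,s_m},r_{m,s_m})$, and the single-Eve reformulation of \textbf{Proposition~\ref{Pro:LMI}} is replicated for every triple $(k,m,s_m)$ with $\mathbf{w}\to\mathbf{w}_k$, $\Gamma\to\Gamma_m$, the parametrization $(\boldsymbol{\zeta}_{m,s_m},\boldsymbol{\Sigma}_{m,s_m},\mathbf{J}_{m,s_m})$ depending only on $m$. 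One minor wording slip: the set $\boldsymbol{\zeta}_{m,s_m}^T\boldsymbol{\Sigma}_{m,s_m}^{-1}\boldsymbol{\zeta}_{m,s_m}\le 1$ with $\boldsymbol{\Sigma}_{m,s_m}=\diag(\epsilon_{m,s_m}^2,\vartheta_{m,s_m}^2)$ is inscribed in, not enclosing, the box $|\Delta r_{m,s_m}|\le\epsilon_{m,s_m}$, $|\Delta\varphi_{m,s_m}|\le\vartheta_{m,s_m}$, but this mirrors the paper's own presentation in \eqref{Exp:SU_set} (whose Appendix~\ref{App:LMI} proof works directly with the box bounds in the GSD step), so it does not affect the argument.
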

	
	Based on {\bf{Proposition}~\ref{Pro:LMI}}, the above constraint~\eqref{Exp:MU_set} can be transformed as the following refined LMIs
	\begin{align}\label{C:MULMIs}
		\mathbf{H}_{m,k,s_{m}} \succeq \mathbf{0},~\forall m \in \mathcal{M}, k\in \mathcal{K}, s_{m}\in \mathcal{S}_{m},
	\end{align}
	where $\mathbf{H}_{m,k,s_{m}}$ is given by~\eqref{Exp:Gen_LMI} at the top of next page.
	\setcounter{equation}{\value{equation}} 
	\begin{figure*}[ht]
		\begin{equation}\label{Exp:Gen_LMI}
			\begin{aligned}
				\mathbf{H}_{m,k,s_{m}} \triangleq \left[
				\begin{array}{@{}c@{~}c@{~}c@{}}
					\left[
					\begin{array}{c@{~~}c@{}}
						{\Gamma}_{m}  & \mathbf{w}_{k}^{H}\mathbf{a}_{m,s_{m}} \\
						\mathbf{a}_{m,s_{m}}^{H}\mathbf{w}_{k} & 1-{\lambda}_{m,k,s_{m}}^{(r)}-{\lambda}_{m,k,s_{m}}^{(\theta)} 
					\end{array}
					\right] 
					& \epsilon_{m,s_{m}} {\mathbf{b}}_{m,k,s_{m}}^{(r)} & \vartheta_{m,s_{m}} {\mathbf{b}}_{m,k,s_{m}}^{(\theta)} \\[1pt]
					\epsilon_{m,s_{m}} ({\mathbf{b}}_{m,k,s_{m}}^{(r)})^H & {\lambda}_{m,k,s_{m}}^{(r)}  & 0 \\[1pt]
					\vartheta_{m,s_{m}} ({\mathbf{b}}_{m,k,s_{m}}^{(\theta)})^H & 0 & {\lambda}_{m,k,s_{m}}^{(\theta)} 
				\end{array}
				\right].
			\end{aligned}
		\end{equation}
		\hrulefill 
		\vspace{-18pt} 
	\end{figure*}  
	Herein, ${\lambda}_{m,k,s_{m}}^{(r)}\ge0$ and $ {\lambda}_{m,k,s_{m}}^{(\theta)}  \ge 0$ are auxiliary variables, ${\mathbf{b}}_{m,k,s_{m}}^{(r)} = [\triangledown_{r}\mathbf{a}^H\big|_{( \varphi_{m,s_{m}}, r_{m,s_m} )}\mathbf{w}_{k}, 0 ]^H$, and  ${\mathbf{b}}_{m,k,s_{m}}^{(\theta)} = [\triangledown_{\theta}\mathbf{a}^H\big|_{( \varphi_{m,s_{m}}, r_{m,s_m} )}\mathbf{w}_{k}, 0 ]^H$.
	
	As such, Problem \textbf{(P5)} can be reformulated as follows
	\begin{align}
		(\textbf{P6}):\; \max_{\{\mathbf{w}_{k}\},\{{\boldsymbol{\Lambda}}_{m}^{(r)}\},\{\boldsymbol{\Lambda}_{m}^{(\theta)}\}} \quad &\sum_{k=1}^{K}~ \tilde{R}_{{\rm B},k}  \nonumber \\
		\text{s.t.} \quad  \quad \quad  \quad &\eqref{C:Power},~\eqref{C:MULMIs}, \nonumber
	\end{align} 
	where ${\boldsymbol{\Lambda}}_{m}^{(r)} \in \mathbb{R}^{K\times S_{m}}$ and $\boldsymbol{\Lambda}_{m}^{(\theta)} \in \mathbb{R}^{K\times S_{m}}$ are auxiliary  matrices with $[{\boldsymbol{\Lambda}}_{m}^{(r)}]_{k,s_{m}} = {\lambda}_{m,k,s_{m}}^{(r)} \ge 0$ and $[{\boldsymbol{\Lambda}}_{m}^{(\theta)} ]_{k,s_{m}} = {\lambda}_{m,k,s_{m}}^{(\theta)}  \ge 0$, respectively. 
	Now, Problem (\textbf{P6}) becomes a convex optimization problem, which can be efficiently solved via the CVX tool to obtain an optimal solution. 
	
	\begin{remark}[Algorithm convergence and computational complexity] \rm First, we analyze the convergence performance of the proposed algorithm.  Since the achievable sum-rate obtained at each SCA iteration is non-decreasing, i.e., $\tilde{R}_{{\rm B},k}^{(j)} \ge \tilde{R}_{{\rm B},k}^{(j-1)}$, the convergence of the proposed algorithm is guaranteed. Second, we evaluate the computational complexity. In each SCA iteration for solving Problem \textbf{(P6)}, the proposed algorithm involves $v = KM \sum_{m=1}^{M} (2S_m + 1)$ LMI constraints of size 4, and one second-order cone (SOC) constraint of size $NK$. Since $\sum_{m=1}^{M} (2S_m + 1)$ is typically much smaller than $N$, the dominant number of variables is in the order of $\mathcal{O}(\eta)$ with $\eta = NK$. According to~\cite{MyTPM}, the computational complexity for solving Problem \textbf{(P6)} is in the order of $\mathcal{O}\big(J\sqrt{4v+2}\cdot \log(1/\bar{\epsilon})(4^3\eta v + 4^2 \eta^2  v + \eta(KN)^2+ \eta^3)\big)$, where  $\bar{\epsilon}$ is the predefined precision of CVX solver and $J$ represents the number of SCA iterations.
	\end{remark}

	\begin{remark}[Multi-path scenarios]\label{Dis:Multipath}
	\rm 
	For multi-path scenarios, we assume that the perfect CSIs of Bobs are available at the BS by using existing channel estimation methods~\cite{Cui2022CE}. However, the CSI of Eve $m$, given by
	\begin{align}
		\mathbf{h}_{{\rm E},m}^{H}
		=\sqrt{N}h_{{\rm E},m} \mathbf{a}^{H}(\mathbf{q}_{{\rm E},m}) +\mathbf{h}_{{\rm E,NLoS},m}^{H},
	\end{align}
	is generally difficult to acquire.  Given that only estimated locations of Eves are available and the power of NLoS paths is typically much weaker than that of the LoS path,  we assume that the NLoS component is bounded by the following condition: $\| \mathbf{h}_{{\rm E,NLoS},m}\|_{2} \le \varepsilon_{{\rm E,NLoS},m} = \kappa_{{\rm E},m} \|\sqrt{N}h_{{\rm E},m} \mathbf{a}^{H}(\mathbf{q}_{{\rm E},m})\|_2$,
	where $\varepsilon_{{\rm E,NLoS},m}$  represents the maximum possible power for the NLoS paths and $\kappa_{{\rm E},m} \in (0,1)$ denotes the power ratio between the NLoS and LoS paths.  As such, a conservative secrecy constraint can be obtained as follows.
	\begin{align}
		&\quad \max_{ \{\mathbf{q}_{{\rm E},m} \in \mathcal{A}_{m} \}}  ~|\mathbf{h}_{{\rm E},m}^H \mathbf{w}_{k} |^2  \nonumber \\
		&\le \max_{ \{\mathbf{q}_{{\rm E},m} \in \mathcal{A}_{m} \}}  \!\! \big\{N |h_{{\rm E},m} |^2 |\mathbf{a}^H({\mathbf{q}}_{{\rm E},m}) \mathbf{w}_{k} |^2 \big\}\!+ \! \varepsilon_{{\rm E,NLoS},m}^{2}\mathbf{w}_{k}^H\mathbf{w}_{k} \nonumber \\
		&\le \sigma^2 (2^{R_{\max}}-1),~\forall m\in \mathcal{M}, k\in \mathcal{K}.
	\end{align}
	Based on the above, the proposed scheme can be employed for robust beamforming design in multi-path scenarios.	
\end{remark}

\begin{remark}[Location uncertainty of Bobs]\label{Dis:BobLoc_error}\rm
    For~scenarios accounting for the location uncertainty of Bobs,  robust beamforming design becomes more challenging due to the complicated objective function induced by such uncertainty. We consider the worst-case sum-rate maximization problem, for which its  optimization objective is given by $\max_{\{\mathbf{w}_{k}\}}~\sum_{k=1}^{K}~\big\{\min_{\{\mathbf{q}_{{\rm B},k}\in \mathcal{A}_{{\rm B},k}\}} {R_{{\rm B},k}} \big\}$, which aims to maximize the sum of the minimum achievable rates under the location uncertainty of Bobs. To tackle this challenge, we first reformulate it into an equivalent form as follows.
		\begin{subequations}
			\begin{align}
				\max_{\{\mathbf{w}_{k},\tau_{k,1},\tau_{k,2}\}}\quad & \sum_{k=1}^{K}\log_{2}\Big(1 + \frac{\tau_{k,1}}{\tau_{k,2}+\sigma^2}\Big) \nonumber \\
				\text{s.t.}\quad\quad\quad & \min_{\{\mathbf{q}_{{\rm B},k}\in \mathcal{A}_{{\rm B},k}\}} | \mathbf{h}_{{\rm B},k}^{H} \mathbf{w}_{k} |^2  \ge \tau_{k,1}, \forall k, \nonumber \\
				& \max_{\{\mathbf{q}_{{\rm B},k}\in \mathcal{A}_{{\rm B},k}\}} \sum_{i=1,i\neq k}^{K} | \mathbf{h}_{{\rm B},k}^{H} \mathbf{w}_{i}   |^{2} \le \tau_{k,2}, \forall k, \nonumber
			\end{align}
		\end{subequations}
		where $\tau_{k,1}$ and $\tau_{k,2}$ are auxiliary variables. Although this is a more complex problem, the proposed two-stage robust beamforming method can be extended to this scenario by
		using the SCA and S-Procedure techniques~\cite{GuiZhou_Robust}.
		The detailed derivations, however, are mathematically intricate and thus are left for future work.
\end{remark}

	\vspace{-6pt}
	\section{Numerical Results}\label{Sec:SR}
	In this section, we present numerical results to demonstrate the efficiency of the proposed robust beamforming scheme in ensuring near-field secure transmission.
	
	\vspace{-6pt}
	\subsection{System Setup and Benchmark Schemes}
	We consider an XL-array system at  $f = 30 $ GHz frequency band, where the BS equipped with $N=256$ antennas serves $K=2$ Bobs in the presence of $M = 2$ Eves. This configuration aligns with typical setups adopted in existing literature~\cite{Cui2022CE,liu2025physical}. The estimated Eve locations are $(\hat{x}_{{\rm E},1},\hat{y}_{{\rm E},1}) = (10, 0.5)$ m and $(\hat{x}_{{\rm E},2},\hat{y}_{{\rm E},2}) = (10, -0.5)$ m with identical location errors  $\sigma_{c,1}  =\sigma_{c,2}=\sigma_{c}=0.1$ m. The Bobs are randomly distributed within a circular region of radius $\Upsilon_{\rm {B}} = 3$ m, which is centered at $(50, 0)$ m. Unless otherwise specified, the system parameters are set as $P_{\max} = 1$ W, $R_{\max} = 1$ bps/Hz, $\sigma^2 = -60$ dBm, and $\alpha = 0.05$.
    \begin{figure}[t]
		\centering
		\includegraphics[width=0.3\textwidth]{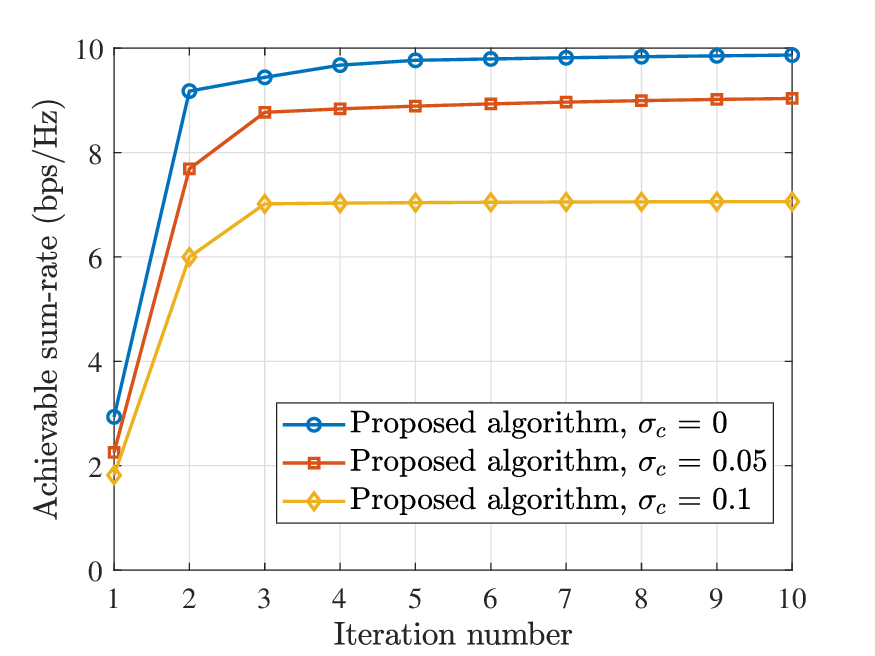}
		\caption{Achievable sum-rate versus iteration number.} \label{Fig:Sec5Convergence}
		\vspace{-12pt}
	\end{figure}
    \begin{figure}[t]
    	\centering
    	\includegraphics[width=0.32\textwidth]{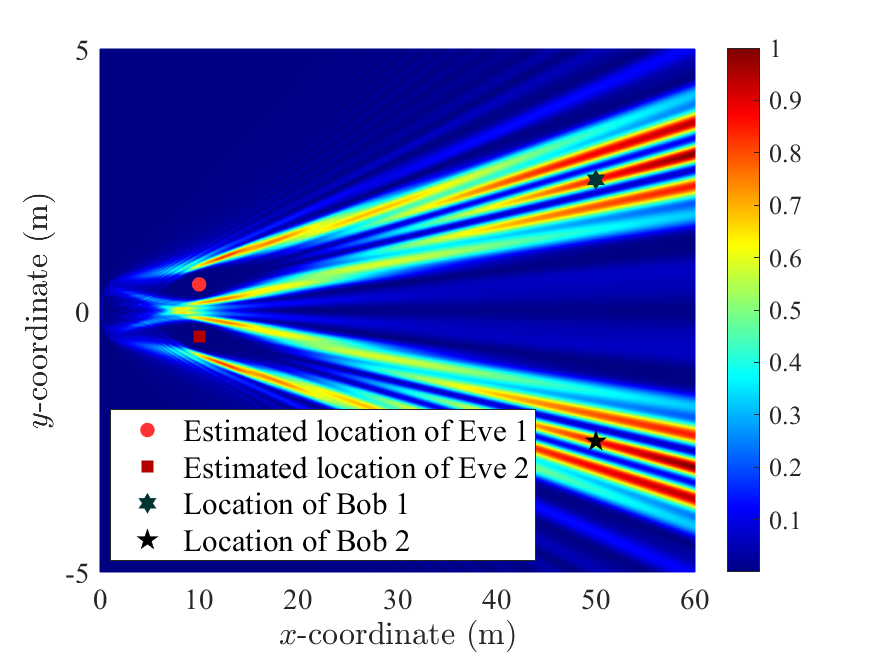}
    	\caption{Beam pattern of the proposed scheme.} \label{Fig:Sec5Mupattern1}
    	\vspace{-16pt}
    \end{figure}
	For performance comparison, the following benchmark schemes are considered.
	\begin{itemize}
		\item \emph{Non-robust scheme}: This scheme designs the beamforming vectors based on the estimated location of Eve solely, without considering the location uncertainty issue.
		
		\item \emph{Sampling-based scheme}: In this scheme, the angle uncertainty region $[\theta_{\rm E,m}^{(\rm lb)},\theta_{\rm E,m}^{(\rm ub)}] $ is uniformly sampled with $\bar{S}_{m}=100$, based on which the beamforming vectors are designed.
		
		\item \emph{Error-bound-based scheme}: In this scheme, the beamforming vectors are designed based on GSD lemma with the accurate error bound $\varepsilon_{m}^{(\rm {ub})}$.
		
		\item \emph{Uncertainty region partitioning only scheme}: For this scheme, the uncertainty region is partitioned into multiple fan-shaped sub-regions. The CSV error bound for each sub-region (i.e., $\varepsilon_{m,s_m}^{(\rm ub)}$) is then used for beamforming design via a conventional LMI reformulation.
		
		\item \emph{Refined LMI only scheme}: This scheme employs the refined LMI reformulation for robust beamforming design, yet without partitioning the uncertainty region.
		
	\end{itemize} 

	\vspace{-6pt}
	\subsection{Performance Analysis}
	\subsubsection{Convergence of the proposed algorithm} To demonstrate the convergence of the proposed algorithm, Fig.~\ref{Fig:Sec5Convergence} plots the achievable sum-rate against the iteration number under different location errors $\sigma_{\rm c}$. The results show that the algorithm converges to a stationary point within nearly 10 iterations for all considered values of $\sigma_{\rm c}$.

	\subsubsection{Beam pattern of proposed scheme} 
	In Fig.~\ref{Fig:Sec5Mupattern1}, we present the beam pattern of the proposed scheme for the two Bobs located at $(x_{{\rm B},1},y_{{\rm B},1})=(50, 2.5)$ m and $(x_{{\rm B},2},y_{{\rm B},2})=(50, -2.5)$ m. It is observed that both Bobs achieve satisfactory beam gain, even when one is Eve positioned at the same spatial angle. This is attributed to the near-field effect, which allows for flexible beamforming design across joint angle-range domains. Importantly, the beam gains at the estimated Eve locations and their surrounding areas are significantly suppressed, as our proposed robust beamforming design effectively minimizes energy leakage within the uncertainty region for each Eve to meet the worst-case secrecy constraint.

	\subsubsection{Effect of location error} 
	\begin{figure}[t]
		\centering
		\begin{subfigure}{0.4\textwidth}
			\centering
			\includegraphics[width=0.9\linewidth]{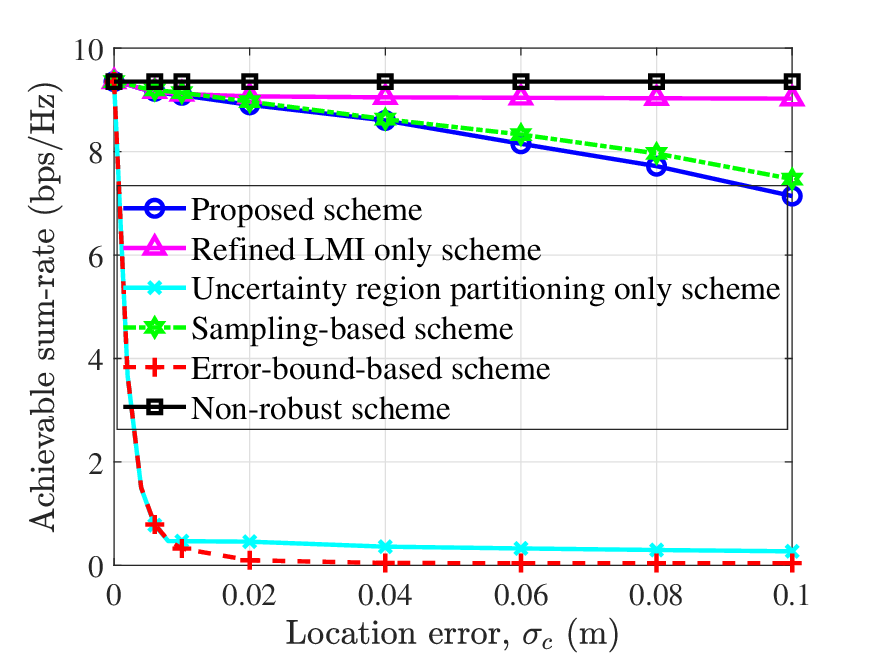}
			\caption{Achievable sum-rate versus location error.}
			\label{Fig:Sec5Rate_Error}
		\end{subfigure}
		\hspace{15pt} 
		\begin{subfigure}{0.4\textwidth}
			\centering
			\includegraphics[width=0.9\linewidth]{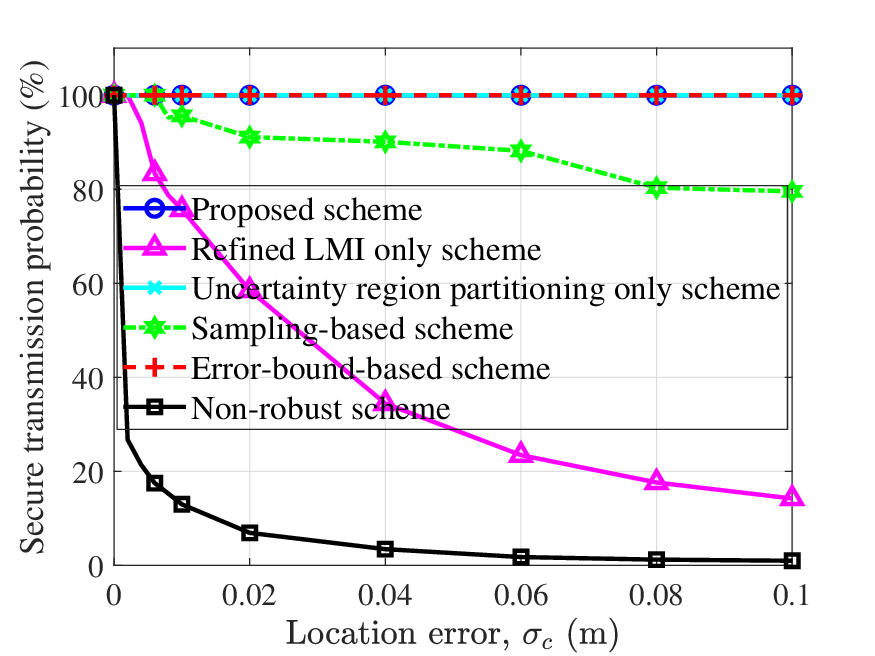}
			\caption{Secure transmission probability versus location error.}
			\label{Fig:Sec5Feas_Error}
		\end{subfigure}
		\caption{Achievable rate and secure transmission probability versus power ratio.}
		\label{Fig:Error}
		\vspace{-15pt} 
	\end{figure}
	
	To show the effect of location error, we plot in Figs.~\ref{Fig:Error}(a) and~\ref{Fig:Error}(b) the curves of achievable sum-rate and secure transmission probability versus the location error $\sigma_{{\rm c}}$ for all schemes, respectively. The non-robust scheme achieves the highest sum-rate by enforcing secrecy constraints only at the estimated locations of the Eves. This allows to increase the achievable sum-rate, while it results in a significantly low secure transmission probability, as shown in Fig.~\ref{Fig:Error}(b). On the other hand, both the error-bound-based scheme and the uncertainty region partitioning only scheme suffer from significantly degraded rate performance as $\sigma_{\rm c}$ increases, which approaches zero when $\sigma_{\rm c} > 0.02$. This is because the CSV error bound in these two schemes remains large even for small location errors (e.g., $\sigma_{\rm c} = 0.02$), thereby imposing strict constraints on the optimized BS transmit power.
	The proposed scheme, along with the refined LMI-only and sampling-based schemes, demonstrates similar achievable rates, as shown in Fig.~\ref{Fig:Error}(a). However, for the latter two schemes, the secure transmission probability is significantly lower as $\sigma_{\rm c}$ increases. These results indicate that both the uncertainty region partitioning and the refined LMI reformulation contribute to improved secrecy performance. Among these, although the achievable sum-rate of the proposed scheme gradually decreases with $\sigma_{\rm c}$, it strikes a favorable trade-off between rate performance and secrecy under location uncertainty, significantly outperforming the other benchmark schemes in terms of both robustness and rate performance.

	\subsubsection{Effect of power for NLoS paths} 
	\begin{figure}[t]
		\centering
		\begin{subfigure}{0.4\textwidth}
			\centering
			\includegraphics[width=0.9\linewidth]{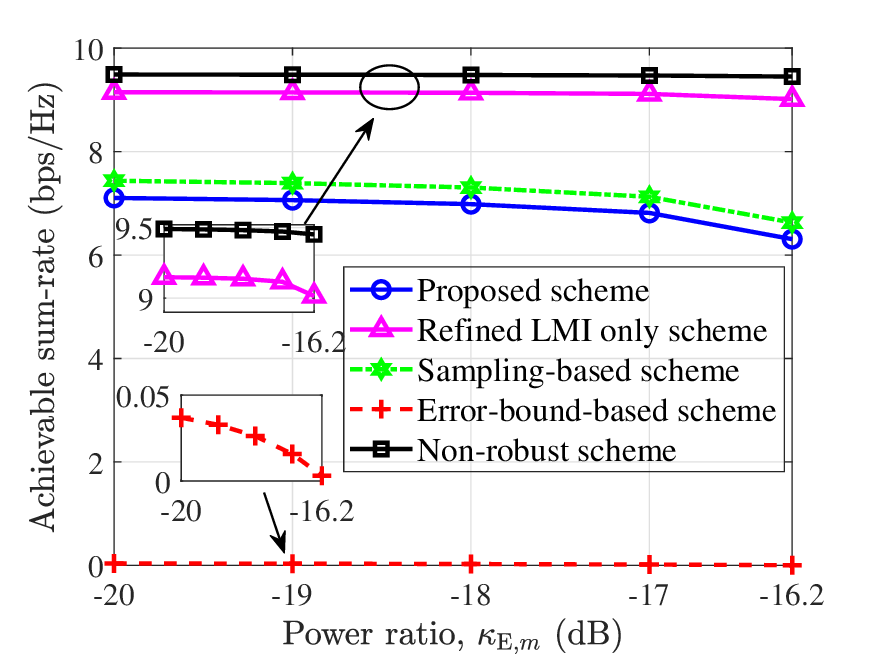}
			\caption{Achievable sum-rate versus power ratio.}
			\label{Fig:Sec5Rate_Powratio}
		\end{subfigure}
		\hspace{15pt} 
		\begin{subfigure}{0.4\textwidth}
			\centering
			\includegraphics[width=0.9\linewidth]{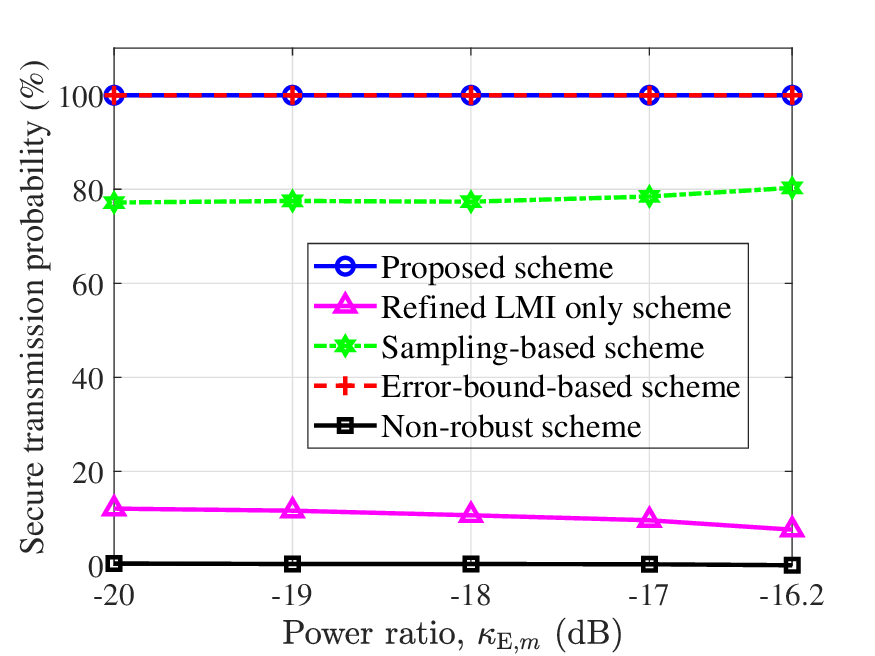}
			\caption{Secure transmission probability versus power ratio.}
			\label{Fig:Sec5Feas_Powratio}
		\end{subfigure}
		\caption{Achievable rate and secure transmission probability versus power ratio.}
		\label{Fig:Powratio}
		\vspace{-15pt} 
	\end{figure}
	
	Last, we show in Figs.~\ref{Fig:Powratio}(a) and~\ref{Fig:Powratio}(b) the performance of different schemes versus the maximum possible power for NLoS paths as discussed in {\bf Remark~\ref{Dis:Multipath}}. 
	Similar observations are made: the non-robust scheme achieves the highest achievable sum-rate, while it suffers from a very low secure transmission probability. In contrast, our proposed scheme ensures a secure transmission probability of $1$, while achieving  favorable rate performance. 
	Moreover, the achievable sum-rate of all five schemes decreases with the power ratio $\kappa_{{\rm E},m}$.  
	These results suggest that in multi-path scenarios, stronger NLoS components impose more conservative secrecy constraints (see {\bf Remark~\ref{Dis:Multipath}}), thereby degrading the achievable sum-rate, particularly for the sampling-based and proposed schemes.

	\vspace{-6pt}
	\section{Conclusions}\label{Sec:Con}
        In this paper, we studied \emph{robust} beamforming design for near-field PLS systems, under the assumption of perfect location information of Bobs and imperfect location information of Eves. To solve the formulated sum-rate maximization problem under worst-case eavesdropping rate constraints, we first revealed the \emph{near-field angular-error amplification} effect, which significantly degrades the performance of conventional robust beamforming schemes. To tackle this challenge, we analyzed the accuracy of first-order Taylor approximation for the near-field CSV and established its region of sufficient accuracy. Based on these insights, we proposed a two-stage robust beamforming framework: the first stage partitions the uncertainty region, and the second stage employs a refined LMI-based reformulation. This method was further extended to the multi-Bob multi-Eve scenario. Last, numerical results validated the robustness and superiority of the proposed scheme in near-field PLS systems. The proposed robust beamforming design can be extended to scenarios such as extremely large-scale intelligent reflecting surface (XL-IRS)-aided systems,  space-air-ground integrated networks, and beam tracking for ISAC.
        
	\begin{appendices}  
		\section{}\label{App:PowerUB}
		A necessary condition for the feasibility of the optimization problem corresponding to the error-bound-based method is $\mathbf{G}_{\rm E} \succeq \mathbf{0}$.
		To ensure $\mathbf{G}_{\rm E} \succeq \mathbf{0}$, the following inequalities should be guaranteed
		{\small\begin{align}
		\left|
			\begin{array}{@{}c@{}c@{}}
				{\Gamma} - \hat{\lambda}_{\rm E} &  \mathbf{a}(\hat{\theta}_{\rm E},\hat{r}_{\rm E})^H\mathbf{w}   \\
				\mathbf{w}^H\mathbf{a}(\hat{\theta}_{\rm E},\hat{r}_{\rm E})  &     1    
			\end{array}
			\right| \ge 0,~  
			\left|\begin{array}{@{}c@{~~}c@{}}
				1     & \varepsilon_{\rm Tayl}^{(\rm ub)} \mathbf{w}^H \\
				\varepsilon_{\rm Tayl}^{(\rm ub)} \mathbf{w}  & \hat{\lambda}_{\rm E} \mathbf{I}_{(N\times N)} 
			\end{array}\right| \ge 0.\nonumber 
		\end{align}}Specifically, the first determinant inequality guarantees the positive semi-definiteness of the top-left submatrix of $\mathbf{G}_{\rm E}$, while the second ensures that of the bottom-right submatrix.
		The first determinant inequality can be rewritten as $	\hat{\lambda}_{\rm E} \le \Gamma$. Moreover, 
		based on $\det\Big(\begin{array}{@{}c@{~}c@{}} 
			\mathbf{A} &\mathbf{B}\\
			\mathbf{C} &\mathbf{D}\\
		\end{array}
		\Big) = \det(	\mathbf{A} ) \det(	\mathbf{D} -\mathbf{C} 	\mathbf{A}^{-1}   \mathbf{B} )  $,  the second determinant inequality can be rewritten as $(\hat{\lambda}_{\rm E})^{N-1}\Big(\hat{\lambda}_{\rm E} - \big(\varepsilon_{\rm Tayl}^{(\rm ub)}\big)^2 \|\mathbf{w}\|_2^2 \Big) \ge 0$. For any $\Gamma \ge \hat{\lambda}_{\rm E}>0$, we have $\|\mathbf{w}\|_{2}^{2} \le {\Gamma}/{\big(\varepsilon_{\rm Tayl}^{(\rm ub)}\big)^2}$, thus completing the proof.

		\section{}\label{App:rangeBound}
		According to~\cite{Cui2022CE}, when $\theta_{\rm E} = \hat{\theta}_{\rm E}$, the term $\mathbf{a}^H(\hat{\theta}_{{\rm E}},r_{\rm E}) \mathbf{a}(\hat{\theta}_{{\rm E}},\hat{r}_{\rm E})$ can be approximated as $ \frac{C(\beta)+\jmath S(\beta)}{\beta} $ with 
		$\beta = \sqrt{\frac{N^2d^2 (1-\sin^2\hat{\theta}_{\rm E}) }{2\lambda}\Big|\frac{1}{r_{\rm E}}-\frac{1}{\hat{r}_{\rm E}} \Big| }$, 
		$C(\beta)=\int_{0}^{\beta} \cos(\frac{\pi}{2}t^2) dt $, 
		and  $S(\beta)=\int_{0}^{\beta}\sin(\frac{\pi}{2}t^2)$. 
		Based on the above,  the term $\mathcal{R}\{\mathbf{a}^H(\hat{\theta}_{{\rm E}},r_{\rm E}) \mathbf{a}(\hat{\theta}_{{\rm E}},\hat{r}_{\rm E})\} $ can be rewritten as ${C(\beta)}/{\beta}$, leading to $\varepsilon_{r}$ in~\eqref{Exp:rangeBound}.
		
		When $\Delta r_{\rm E} /\hat{r}_{\rm E}$  is sufficiently small to ensure that $\beta^2 \le 1/\pi$, we have
		$\cos (\frac{\pi}{2}t^2)\approx 1 - \frac{\pi^2}{8}t^4$ for $0\le t\le \beta $, and ${C(\beta)}/{\beta}$ in ~\eqref{Exp:rangeBound} can be approximated as
		$	\frac{C(\beta)}{\beta} \approx \frac{\int_{0}^{\beta} \big(1 -\frac{\pi^2}{8}t^4  \big) dt }{\beta} = 1 - \frac{\pi^2}{40}\beta^4$. 
		Consequently, $\varepsilon_{r}$ can be rewritten as
		\begin{align}\label{Eq:varepsilon_R}
			\varepsilon_{r} &\approx  \frac{\pi N^2d^2 (1-\sin^2\hat{\theta}_{\rm E}) }{2\sqrt{20}\lambda} \Big|\frac{1}{{r}_{\rm E} }-\frac{1}{\hat{r}_{\rm E}} \Big|. 
		\end{align}
		By substituting $r_{\rm E} = \hat{r}_{\rm E}-\Delta r_{\rm E}$ into~\eqref{Eq:varepsilon_R}, we obtain  
		\begin{align}
			\varepsilon_{r} \overset{(b)}{\approx} f_{r}(\hat{\theta}_{\rm E},\hat{r}_{\rm E}) \Delta r_{\rm E}, 
		\end{align}
		where $ f_{r}(\hat{\theta}_{\rm E},\hat{r}_{\rm E}) = \frac{\pi N^2d^2 (1-\sin^2\hat{\theta}_{\rm E}) }{2\sqrt{20}\lambda \hat{r}_{\rm E}^2 } $, and $(b)$ holds by using 
		$\frac{1}{\hat{r}_{\rm E} - \Delta r_{\rm E}} \approx \frac{1}{\hat{r}_{\rm E}} + \frac{\Delta r_{\rm E}}{\hat{r}_{\rm E}^2}$ under the condition  $ \Delta r_{\rm E} \ll \hat{r}_{\rm E}$ (e.g., $\Delta r_{\rm E} \le 0.1 \hat{r}_{\rm E}$), thus completing the proof.
		
		\vspace{-6pt}
		\section{}\label{App:angleBound}
		When $r_{\rm E} = \hat{r}_{\rm E}$, the approximation  
		$	\frac{1-\sin^2\theta_{{\rm E}}}{r_{\rm E}} -\frac{1-\sin^2\hat{\theta}_{{\rm E}}}{\hat{r}_{\rm E}} \approx  0 $
		holds due to the relatively small angle error. As such, the term $\mathbf{a}^H({\theta}_{{\rm E}},\hat{r}_{\rm E}) \mathbf{a}(\hat{\theta}_{{\rm E}},\hat{r}_{\rm E})$ can be approximated as
		\begin{align}
			&\mathbf{a}^H({\theta}_{{\rm E}},\hat{r}_{\rm E}) \mathbf{a}(\hat{\theta}_{{\rm E}},\hat{r}_{\rm E})
			=\frac{1}{N} \sum_{n=1}^{N} e^{\jmath\frac{2\pi}{\lambda}\big(u_{n} (\sin\theta_{{\rm E}}- \sin \hat{\theta}_{{\rm E}}  ) \big)} \nonumber \\
			=&  \frac{\sin\big(\frac{1}{2} N \pi (\sin\theta_{\rm E}-\sin\hat{\theta}_{\rm E}) \big)}{N \sin\big(\frac{1}{2}\pi (\sin\theta_{\rm E}-\sin\hat{\theta}_{\rm E}) \big)} \triangleq \Xi(\sin\theta_{\rm E}-\sin\hat{\theta}_{\rm E}).
		\end{align}
		Given that $\Xi(\sin\theta_{\rm E}-\sin\hat{\theta}_{\rm E})$ decreases monotonically over the interval between $\sin\hat{\theta}_{\rm E}$ and $\sin\hat{\theta}_{\rm E} + \frac{3}{N}$, and reaches its minimum around $\sin\hat{\theta}_{\rm E} + \frac{3}{N}$, we thus obtain  $\varpi$ in~\eqref{Exp:angleBound}.
		
		By denoting $ \xi  = \sin\theta_{\rm E}-\sin\hat{\theta}_{\rm E}$,  the term $\varpi$ in~\eqref{Exp:angleBound} is given by
		$	\varpi = \frac{\sin(\frac{1}{2}N\pi\xi)}{N\sin(\frac{1}{2}\pi\xi)}. $
		Given $\xi \leq \frac{1}{2N}$ and $N \gg 1$, it follows that $0 \leq \frac{1}{2}N\pi\xi \leq \frac{\pi}{4}$ and $\frac{1}{2}\pi\xi \approx 0$. Using the approximations $\sin x \approx x - \frac{1}{6}x^3$ for $0 < x < \pi/4$ and $\sin y \approx y$ for $y \approx 0$,  $\varpi $ can be approximated~as
		$	\varpi \approx \frac{N \frac{1}{2}\pi \xi}{N\frac{1}{2}\pi \xi}\Big(1-\frac{1}{6} \big(\frac{1}{2}N\pi \xi\big)^2\Big) = 1-\frac{\pi^2N^2 \xi^2}{24}$.
		Consequently,  $\varepsilon_{\theta}$ can be rewritten as
		\begin{align}
			\varepsilon_{\theta}^{(\rm ub)} 
			&  = \frac{\pi N}{\sqrt{12}} \big(\sin (\hat{\theta}_{\rm E} + \Delta \theta_{\rm E} ) - \sin\hat{\theta}_{\rm E} \big) \nonumber \\
			& \overset{(c)}{\approx} \frac{\pi N}{\sqrt{12}} 
			\cos \hat{\theta}_{\rm E}  \Delta \theta_{\rm E}  \triangleq f_{\theta}(\hat{\theta}_{\rm E}) \Delta \theta_{\rm E},
		\end{align}
		where $f_{\theta}(\hat{\theta}_{\rm E}) =\frac{\pi N}{\sqrt{12}} \cos \hat{\theta}_{\rm E} $, and
		$(c)$ holds due to  $\cos \Delta \theta_{\rm E} \approx 1$ and $\sin \Delta \theta_{\rm E} \approx \Delta \theta_{\rm E}$ for small $\Delta \theta$,  thus completing the proof.
		
		\vspace{-6pt}
		\section{}\label{App:TaylorBound}
		The CSV error in the range domain obtained via Taylor approximation (i.e., $\varepsilon_{r,{\rm Tayl}}$)  can be expressed as
		\begin{align}
			&\varepsilon_{r,{\rm Tayl}} = \| \triangledown_{r}\mathbf{a}|_{(\hat{\theta}_{\rm E},\hat{r}_{\rm E})}\Delta r_{\rm E} \|_{2}  \nonumber \\
			=&\sqrt{\sum_{n=1}^N \frac{1}{N} \Big( \frac{\pi u_n^2 \cos^2 \hat{\theta}_{\rm E}}{\lambda \hat{r}_{\rm E}^2}  \Big)^2} \Delta r_{\rm E} =  \frac{\pi\cos^2\hat{\theta}_{\rm E}}{\lambda \hat{r}_{\rm E}^2 \sqrt{N}}\sqrt{\sum_{n=1}^{N} u_n^4 } \Delta r_{\rm E}. \nonumber
		\end{align}
		Based on antenna positions $u_{n},n\in\mathcal{N}$, 
		the term $\sum_{n=1}^{N} u_n^4$ can be expressed~as $\sum_{n=1}^{N} u_n^4 = \frac{d^4 N(N^2-1)(3N^2-7)}{240} \approx \frac{d^4N^5}{80}$.
		As such, the CSV error $\varepsilon_{r,{\rm Tayl}}$ is rewritten as
		\begin{align}
			\varepsilon_{r,{\rm Tayl}} = \frac{\pi\cos^2\hat{\theta}_{\rm E}}{2 \lambda \hat{r}_{\rm E}^2} \frac{N^2d^2}{\sqrt{20}} \Delta r_{\rm E},
		\end{align}
		i.e.,  $\varepsilon_{r,{\rm Tayl}} = \varepsilon_{r,{\rm appr}} $. Similarly, the CSV error in the angle domain obtained via Taylor approximation $\varepsilon_{\theta,{\rm Tayl}}$ can be expressed as
		\begin{align}
			&\varepsilon_{\theta,{\rm Tayl}} = \| \triangledown_{\theta}\mathbf{a}|_{(\hat{\theta}_{\rm E},\hat{r}_{\rm E})}\Delta \theta_{\rm E} \|_{2}  \nonumber \\
			\overset{(d)}{\approx}	& \sqrt{\sum_{n=1}^N \frac{1}{N} \Big( \frac{2\pi u_{n}\cos \hat{\theta}_{\rm E}}{\lambda} \Big)^2} \Delta \theta_{\rm E} =  \frac{2\pi\cos\hat{\theta}_{\rm E}}{\lambda \sqrt{N}}\sqrt{\sum_{n=1}^{N} u_n^2 } \Delta \theta_{\rm E}, \nonumber
		\end{align}
		where $(d)$ since  $\frac{u_{n}^2\sin \hat{\theta}_{{\rm E}} \cos \hat{\theta}_{{\rm E}}}{\hat{r}_{\rm E}}$ is sufficiently  small. Given that ${\sum_{n=1}^{N} u_n^2 } \approx \frac{N^3\lambda^2}{48}$, we obtain $\varepsilon_{\theta,{\rm Tayl}}=\frac{\pi N \cos \hat{\theta}_{\rm E}}{\sqrt{12}} \Delta \theta_{\rm E}=  \varepsilon_{\theta,{\rm appr}} $, thus completing the proof.
		
		\section{}\label{App:partialTaylor}
		By denoting $\Delta \mathbf{a}_{\theta} \!=\! \mathbf{a}({\theta}_{\rm E},\hat{r}_{\rm E}) -  \mathbf{a}(\hat{\theta}_{\rm E},\hat{r}_{\rm E}) - \triangledown_{\theta}\mathbf{a}|_{(\hat{\theta}_{\rm E},\hat{r}_{\rm E})}\Delta \theta_{\rm E}$, the squared-norm of $\Delta \mathbf{a}_{\theta}$ is given by
		\begin{align}
			\|\Delta \mathbf{a}_{\theta}\|_{2}^{2} 	&=  \|\mathbf{a}({\theta}_{\rm E},\hat{r}_{\rm E})-\mathbf{a}(\hat{\theta}_{\rm E},\hat{r}_{\rm E})\|_{2}^{2} + \|\triangledown_{\theta}\mathbf{a}|_{(\hat{\theta}_{\rm E},\hat{r}_{\rm E})}\Delta \theta_{\rm E} \|_{2}^{2} \nonumber \\
			&- 2\mathcal{R} \{\big( \mathbf{a}^H({\theta}_{\rm E},\hat{r}_{\rm E})
			- \mathbf{a}^H(\hat{\theta}_{\rm E},\hat{r}_{\rm E})\big) \triangledown_{\theta}\mathbf{a}|_{(\hat{\theta}_{\rm E},\hat{r}_{\rm E})}\Delta \theta_{\rm E}  \} \nonumber \\
			& = 2 \varepsilon_{\theta,{\rm Tayl}}^{2} - 2\mathcal{R}\{\mathbf{a}^H({\theta}_{\rm E},\hat{r}_{\rm E}) \triangledown_{\theta}\mathbf{a}|_{(\hat{\theta}_{\rm E},\hat{r}_{\rm E})}\Delta \theta_{\rm E}\}. \nonumber
		\end{align}
		Based on the specific  expression of $\triangledown_{\theta}\mathbf{a}$ in~\eqref{Exp:Taylorwrttheta}, we have
		\begin{align}
			&\quad 2\mathcal{R}\{\mathbf{a}^H({\theta}_{\rm E},\hat{r}_{\rm E}) \triangledown_{\theta}\mathbf{a}|_{(\hat{\theta}_{\rm E},\hat{r}_{\rm E})}\Delta \theta_{\rm E}\} \nonumber \\
			&\overset{(e)}{\approx} 2\mathcal{R}\Big\{\frac{\jmath}{N}\sum_{n=1}^{N} e^{\jmath\frac{2\pi}{\lambda}u_{n}(\sin\hat{\theta}_{{\rm E}} -\sin\theta_{{\rm E}} ) } \cdot \frac{2\pi}{\lambda}u_{n} \cos\hat{\theta}_{\rm E} \Delta \theta_{\rm E} \Big\} \nonumber \\
			& \overset{(f)}{\approx} 2\mathcal{R}\Big\{\frac{\jmath}{N}\sum_{n=1}^{N} e^{\jmath\frac{2\pi}{\lambda}u_{n} \cos\hat{\theta}_{\rm E} \Delta \theta_{\rm E} } \cdot \frac{2\pi}{\lambda}u_{n} \cos\hat{\theta}_{\rm E}\Delta \theta_{\rm E} \Big\} \nonumber \\
			& \overset{(g)}{\approx} 2 \sum_{n=1}^{N} 
			\big(\frac{2\pi}{\lambda}u_{n} \cos\hat{\theta}_{\rm E} \Delta \theta_{\rm E} \big)^2/N =   2 \varepsilon_{\theta,{\rm Tayl}}^{2},
		\end{align}
		where $(e)$ holds due to the sufficiently small quadratic terms,
		$(f)$ holds by using $\sin\theta_{\rm E}-\sin\hat{\theta}_{\rm E} \approx \cos \hat{\theta}_{\rm E}  \Delta \theta_{\rm E}$, and $(g)$ holds when 
		$\sin\theta_{\rm E}-\sin\hat{\theta}_{\rm E} \le \frac{1}{2N}$. Based on the above, we obtain $\|\Delta \mathbf{a}_{\theta}\|_{2}^{2} \approx 0$. Similarly, it can be easily shown that  $\|\Delta \mathbf{a}_{r}\|_{2}^{2} \approx 0$, with details omitted for brevity.

		\section{}\label{App:LMI}
		Given arbitrary $\!\boldsymbol{\zeta}_{s}\!$ that satisfies $ \boldsymbol{\zeta}_{s}^T \boldsymbol{\Sigma}_{s}^{-1} \boldsymbol{\zeta}_{s} \le1 $, the constraint~\eqref{Exp:SU_set} holds when 
		$\small	\left[
			\begin{array}{@{}c@{}c@{}}
				{\Gamma} & \mathbf{w}^{H}(\mathbf{a}_{s} + \mathbf{J}_{s} \boldsymbol{\zeta}_{s} ) \\
				(\mathbf{a}_{s} + \mathbf{J}_{s} \boldsymbol{\zeta}_{s})^{H} \mathbf{w} & 1
			\end{array}
			\right] \succeq \mathbf{0} \small$,
		 where $\mathbf{J}_{s} = [\triangledown_{r}\mathbf{a},\triangledown_{\theta}\mathbf{a}]\big|_{( \varphi_{s}, r_{s} )}$ is the gradient matrix,
		$ \boldsymbol{\zeta}_{s} = [\Delta r_{s},  \Delta \varphi_{s}]^T $ is the error vector,  
		and $\boldsymbol{\Sigma}_{s}=\text{diag}(\epsilon_{s}^2,\vartheta_{s}^2) $. By expanding the above LMI, it can be rewritten as
		{\small\begin{align}
			&\quad \left[
			\begin{array}{c@{}c@{}}
				{\Gamma} & \mathbf{w}^{H}\mathbf{a}_{s} \\
				\mathbf{a}_{s}^{H} \mathbf{w} & 1
			\end{array}
			\right] \succeq
			- 	\left[
			\begin{array}{@{}c@{}c@{}}
				0 & \mathbf{w}^{H} \mathbf{J}_{s} \boldsymbol{\zeta}_{s} \\
				( \mathbf{J}_{s} \boldsymbol{\zeta}_{s})^{H} \mathbf{w} & 0
			\end{array}
			\right] \nonumber \\
			= &-\left[\begin{array}{@{}c@{}}
				\mathbf{w}^{H} \triangledown_{r}\mathbf{a}\big|_{( \varphi_{s}, r_{s} )} \\
				0
			\end{array}\right]\Delta r_{s}[0, 1 ]
			-\left[\begin{array}{@{}c@{}}
				\mathbf{w}^{H} \triangledown_{\theta}\mathbf{a}\big|_{( \varphi_{s}, r_{s} )} \\
				0
			\end{array}\right]\Delta \varphi_{s}[0, 1 ]
			\nonumber \\
			&- \left[\begin{array}{@{}c@{}}
				0 \\
				1 
			\end{array}\right] \Delta r_{s} [\triangledown_{r}\mathbf{a}^H\big|_{( \varphi_{s}, r_{s} )}\mathbf{w}, 0] 
			- \left[\begin{array}{@{}c@{}}
				0 \\
				1 
			\end{array}\right] \Delta \varphi_{s} [\triangledown_{\theta}\mathbf{a}^H\big|_{( \varphi_{s}, r_{s} )}\mathbf{w}, 0]. \nonumber
		\end{align}}Considering that $ |\Delta r_{s}| \le \epsilon_{s}$	and $|\Delta \varphi_{s}| \le \vartheta_{s} $, based on the GSD lemma in~\cite{GuiZhou_Robust},  constraint~\eqref{Exp:SU_set} can be rewritten as the LMI in~\eqref{Exp:LMI},
		where ${\lambda}_{s}^{(r)}\ge0$ and $ {\lambda}_{s}^{(\theta)}  \ge 0$ are auxiliary variables,  ${\mathbf{b}}_{s}^{(r)} = [\triangledown_{r}\mathbf{a}^H\big|_{( \varphi_{s}, r_{s} )}\mathbf{w}, 0 ]^H$, and  ${\mathbf{b}}_{s}^{(\theta)} = [\triangledown_{\theta}\mathbf{a}^H\big|_{( \varphi_{s}, r_{s} )}\mathbf{w}, 0 ]^H$, thus completing the proof.

	\end{appendices}

	\bibliographystyle{IEEEtran}
	\bibliography{Secure.bib}
	
\end{document}